\documentclass[lettersize,journal]{IEEEtran}
\usepackage[margin=0.64in]{geometry}

\usepackage[T1]{fontenc}
\usepackage{amssymb}
\usepackage[utf8]{inputenc}
\usepackage[english]{babel}
\usepackage{amsmath}			
\usepackage{courier}

\usepackage{comment}				
\usepackage[T1]{fontenc}
\usepackage{mathtools}
\usepackage{algcompatible,amsmath}
\usepackage[ruled,vlined]{algorithm2e}
\usepackage{graphicx}
\usepackage{multicol}
\usepackage{afterpage}
\usepackage[noend]{algpseudocode}
\usepackage{color}
\usepackage{etoolbox}
\usepackage{url}
\usepackage{siunitx,booktabs}
\usepackage{placeins}
\usepackage{bbm}
\usepackage{soul}
\usepackage[noadjust]{cite}
\DeclareMathOperator{\tr}{tr}
\usepackage{amsfonts}

\usepackage{accents}

\normalsize
\makeatletter
\def\underbracex#1#2{\mathop{\vtop{\m@th\ialign{##\crcr
				$\hfil\displaystyle{#2}\hfil$\crcr
				\noalign{\kern3\p@\nointerlineskip}%
				#1\crcr\noalign{\kern3\p@}}}}\limits}

\def\upbracefilla{$\m@th \setbox\z@\hbox{$\braceld$}%
	\bracelu\leaders\vrule \@height\ht\z@ \@depth\z@\hfill 
	\kern\p@\vrule \@width\p@\kern\p@\vrule \@width\p@\kern\p@\vrule \@width\p@
	$}

\def\upbracefillb{$\m@th \setbox\z@\hbox{$\braceld$}%
	\vrule \@width\p@\kern\p@\vrule \@width\p@\kern\p@\vrule \@width\p@\kern\p@
	\leaders\vrule \@height\ht\z@ \@depth\z@\hfill\bracerd
	\braceld\leaders\vrule \@height\ht\z@ \@depth\z@\hfill
	\kern\p@\vrule \@width\p@\kern\p@\vrule \@width\p@\kern\p@\vrule \@width\p@
	$}

\def\upbracefillc{$\m@th \setbox\z@\hbox{$\braceld$}%
	\vrule \@width\p@\kern\p@\vrule \@width\p@\kern\p@\vrule \@width\p@\kern\p@
	\leaders\vrule \@height\ht\z@ \@depth\z@\hfill
	\kern\p@\vrule \@width\p@\kern\p@\vrule \@width\p@\kern\p@\vrule \@width\p@
	$}

\def\upbracefilld{$\m@th \setbox\z@\hbox{$\braceld$}%
	\vrule \@width\p@\kern\p@\vrule \@width\p@\kern\p@\vrule \@width\p@\kern\p@
	\leaders\vrule \@height\ht\z@ \@depth\z@\hfill\braceru$}
	
\def\underbracex#1#2{\mathop{\vtop{\m@th\ialign{##\crcr
				$\hfil\displaystyle{#2}\hfil$\crcr
				\noalign{\kern3\p@\nointerlineskip}%
				#1\crcr\noalign{\kern3\p@}}}}\limits}

\def\upbracefilla{$\m@th \setbox\z@\hbox{$\braceld$}%
	\bracelu\leaders\vrule \@height\ht\z@ \@depth\z@\hfill 
	\kern\p@\vrule \@width\p@\kern\p@\vrule \@width\p@\kern\p@\vrule \@width\p@
	$}

\def\upbracefillb{$\m@th \setbox\z@\hbox{$\braceld$}%
	\vrule \@width\p@\kern\p@\vrule \@width\p@\kern\p@\vrule \@width\p@\kern\p@
	\leaders\vrule \@height\ht\z@ \@depth\z@\hfill\bracerd
	\braceld\leaders\vrule \@height\ht\z@ \@depth\z@\hfill
	\kern\p@\vrule \@width\p@\kern\p@\vrule \@width\p@\kern\p@\vrule \@width\p@
	$}

\def\upbracefillbd{$\m@th \setbox\z@\hbox{$\braceld$}%
	\vrule \@width\p@\kern\p@\vrule \@width\p@\kern\p@\vrule \@width\p@\kern\p@
	\bracerd\braceld
	\leaders\vrule \@height\ht\z@ \@depth\z@\hfill\braceru$}

\makeatletter
\patchcmd{\@makecaption}
{\scshape}
{}
{} 
{}
\makeatother
\usepackage{subcaption}
\usepackage{graphicx}
\usepackage{amsthm}

\newtheorem{proposition}{Proposition}

\newtheorem*{proof*}{Proof}

\newcommand{\norm}[1]{\left\lVert#1\right\rVert}

\IEEEoverridecommandlockouts
\makeatletter
\let\origIEEEPARstart\IEEEPARstart
\renewcommand{\IEEEPARstart}[3][1.1]{%
	\def\@IEEEPARstartDROPDEPTH{#1\baselineskip}%
	\origIEEEPARstart{#2}{#3}%
}

\def\endthebibliography{%
	\def\@noitemerr{\@latex@warning{Empty `thebibliography' environment}}%
	\endlist
}
\makeatother

\ifCLASSINFOpdf

\else

\fi

\begin{document}
	
	\title{Symbol-Level Mask-Compliant Hybrid Precoding for Multi-User MIMO-OFDM Systems}

	\author{Navid Reyhanian, Parisa Ramezani, and Emil Bj{\"o}rnson,
		\IEEEmembership{Fellow, IEEE}
			\thanks{N. Reyhanian was with the Department of Electrical Engineering, University of Minnesota, Minneapolis, MN, USA 55455. He is now with Cisco Systems, Milpitas, CA, USA 95035  (email: navid@umn.edu).}
		\thanks{P. Ramezani and E. Bj{\"o}rnson are with the Department of Communication Systems, KTH Royal Institute of Technology, Stockholm, Sweden (email: \{parram,emilbjo\}@kth.se).}\vspace{-0.7cm}
	}
	

	\allowdisplaybreaks

	\maketitle

	\begin{abstract}
Millimeter-wave (mmWave) technology is a crucial enabler for next-generation networks because it offers substantially greater available bandwidth. mmWave multiple-input multiple-output (MIMO) systems cannot rely solely on fully digital precoding due to hardware costs. As a result, hybrid precoding, which combines digital baseband processing with RF precoding, has emerged as a practical solution that balances performance and implementation complexity. As mmWave links typically operate over wideband, frequency-selective channels, orthogonal frequency-division multiplexing (OFDM) is commonly used to mitigate dispersive effects, yet OFDM introduces practical drawbacks, including out-of-band (OOB) emissions from abrupt spectral transitions among subcarriers and additional spectral leakage induced by windowing. Moreover, nonideal phase shifters (PS) in the RF transmit precoder and the user combiner impose inherent implementation limits that result in phase errors. We investigate robust joint digital--RF precoder design for minimizing the downlink sum mean-squared error (MSE) in hybrid multi-user (MU) MIMO--OFDM systems subject to maximum transmit-power, clipping, and OOB spectral-mask constraints. The resulting optimization is nonconvex and challenging to solve. To address this, we develop a minimum mean-squared error (MMSE) based block coordinate descent (BCD) algorithm that alternates between updating the transmitter-side digital--RF precoders and the user-side digital--RF combiners. For each BCD subproblem, we propose computationally efficient and scalable, closed-form solution strategies suitable for practical implementation. Extensive simulations validate the proposed methods and show clear performance improvements over established benchmark schemes.
		
	\end{abstract}
	\begin{IEEEkeywords}
		mmWave technology,	hybrid precoding, MU-MIMO-OFDM, spectral mask, clipping, phase errors, block coordinate descent, alternating direction method of multipliers.
	\end{IEEEkeywords}
	
	\IEEEpeerreviewmaketitle

	\section{Introduction}
	\label{sec:introduction}
\IEEEPARstart{M}{illimeter} wave (mmWave) communication is a strong candidate for easing the spectrum shortage in future wireless networks \cite{7913599}. Although mmWave links suffer higher path loss and are more easily blocked than sub-6~GHz links, the short wavelength permits dense antenna packing, making massive multiple-input multiple-output (MIMO) practical. With suitable precoding and combining, MIMO systems compensate for harsh propagation via spatial multiplexing and diversity gains. These benefits are especially important in multi-user MIMO (MU-MIMO), where a base station (BS) serves several users on the same time--frequency resources and precoding must additionally control inter-user interference.

However, fully-digital precoding is impractical for large arrays because it requires one radio-frequency (RF) chain per antenna, leading to prohibitive hardware cost and power consumption. A widely considered alternative is the hybrid phase-shifter (PS) based digital--RF architecture, which pairs a high-dimensional analog precoder built from low-complexity PS networks with a lower-dimensional digital precoder \cite{7913599,7397861,6928432}, closely approaching fully-digital performance at much lower complexity \cite{8048606}.

A further challenge is that mmWave systems are typically wideband, requiring orthogonal frequency-division multiplexing (OFDM) to handle frequency selectivity. In hybrid OFDM architectures, the RF precoder is shared across all subcarriers while the digital precoder is optimized per subcarrier \cite{9467491}, introducing a cross-frequency coupling absent in flat-fading designs.

Wideband OFDM creates practical difficulties due to nonlinear distortion and out-of-band (OOB) radiation \cite{7366560}. Since OFDM signals can have large peak amplitudes, amplitude clipping is commonly applied to avoid saturation in digital-to-analog converters (DACs) and power amplifiers (PAs), but this causes in-band distortion and spectral regrowth in adjacent channels \cite{9151131}. The problem is more severe at mmWave because wide bandwidths and high transmit powers needed for coverage make OOB emissions more critical \cite{10048700}. Accordingly, wireless standards enforce strict spectral emission masks and adjacent channel leakage ratio (ACLR) limits \cite{9390405}, which motivates hybrid precoding designs that incorporate clipping and spectral constraints, especially when compliance is based on peak-detected measurements and therefore requires symbol-level spectral constraints in the precoder design.

Existing regulations constrain instantaneous peak emissions, both in-band and out-of-band. In particular, the ultra wideband (UWB) rule specifies a peak-power ceiling of 0\,dBm/50\,MHz~\cite{fcc_uwb_15_517}, and Section~96.41 for citizens broadband radio service (CBRS) requires OOB compliance to be assessed through peak-detected measurements~\cite{fcc_cbrs_96_41}. This is especially relevant here because nonlinear analog components placed after the digital precoder can generate spectral regrowth. Therefore, to ensure regulatory compliance in the transmitted waveform, the OOB spectral mask should be imposed on the emitted spectrum of each OFDM symbol \cite{9214877}.

	\subsection{Related Work}
Hybrid precoding has been extensively studied; see, e.g., \cite{7913599,7397861,9467491,10839437,10972245,10778226}. In a partially-connected architecture, each antenna is linked to a single RF chain, keeping complexity and power consumption low; fully-connected designs link every antenna to all RF chains, offering more flexibility at the cost of higher complexity and insertion losses \cite{7913599}. Practical systems typically adopt the partially-connected architecture.

A key difficulty is the RF precoder/combiner design, since the unit-modulus PS constraints make the problem inherently nonconvex. Common approaches include Riemannian conjugate gradient (RCG) methods \cite{7397861,9467491} and phase search-based techniques \cite{wang2018hybrid}.
	
Spectral precoding given deterministic data offers another practical way to shape the transmit spectrum and limit OOB leakage. For example, mask-compliant designs in \cite{6459499,7485853,9214877} directly enforce a target spectral emission mask, rather than trying to cancel the spectrum at a few isolated frequency points. In contrast, \cite{van2009sculpting} proposes a least-squares notching precoder that suppresses OOB components at selected frequencies by keeping the precoded data vector as close as possible (in the Euclidean sense) to the original one. Relatedly, \cite{10048700} considers a SU design that reduces OOB radiation under peak-to-average ratio (PAPR) constraints by adding a small auxiliary suppression signal per OFDM symbol, leveraging cyclic-prefix (CP) and guard tones while remaining transparent to a conventional receiver. The main drawbacks are the additional transmit power and the CP/guard overhead, which can reduce efficiency and throughput.

In many SU designs, the transmit waveform is formed by applying separate spatial and spectral precoders that are designed largely independently; see, e.g., \cite{van2009sculpting,9390405}. This separation becomes problematic in high-rate MU-MIMO-OFDM systems, where spectral precoding can distort the spatially precoded signals and alter their structure in a way that increases MU interference and lowers spectral efficiency. Motivated by this coupling, \cite{taheri2020joint} studies joint spatial--spectral precoding for MU-MIMO-OFDM and proposes schemes based on zero-forcing (ZF) and maximum-ratio transmission (MRT). ZF can be fragile at low signal-to-noise ratio (SNR) because it may suppress the desired signal excessively, whereas MRT mainly boosts the intended user's SNR and is generally insufficient to control inter-user interference. In contrast, minimum mean-square error (MMSE) precoding \cite{1468466} explicitly trades off noise enhancement and interference suppression, and is therefore often preferred in practical MU-MIMO settings \cite{8048606}.
	
PS impairments are unavoidable in practical mmWave hybrid precoding hardware due by fabrication tolerances, non-ideal RF components, and device aging. These effects introduce random phase and gain mismatches relative to the intended PS settings, which can significantly degrade system performance. A common modeling approach treats the impairments as random perturbations around the ideal PS response, with the phase errors often modeled as Gaussian \cite{7956180,7948784,9069262,9767793,8653302}. Recent work has tackled PS uncertainty in different ways: \cite{9767793} develop a robust hybrid precoding design via alternating optimization with outage-probability constraints, while \cite{9069262} proposes estimating an effective downlink channel that inherently captures PS imperfections and then designing the digital precoder based on this estimate (rather than relying on channel reciprocity), improving robustness and reducing training overhead.
	
While the above issues have been explored to some extent in isolation, a unified treatment is needed to address their coupled impact in practical systems. To the best of our knowledge, existing work does not jointly consider all of these aspects within a single robust, mask-compliant precoding framework for hybrid MU-MIMO-OFDM systems, which is the main focus of this paper.
	
\subsection{Our Contributions}
In this paper, we develop a robust hybrid precoding framework for partially-connected MU-MIMO-OFDM transmitters. The key contributions are as follows:
\begin{itemize}
\item We establish a unified sum-MSE minimization formulation that jointly optimizes the transmitter digital and RF precoders and the users' analog and digital combiners, while explicitly enforcing amplitude clipping and OOB emission-mask constraints pointwise for each realized symbol tuple in a batch of independent symbol realizations. In contrast to many existing approaches that mainly aim to reduce OOB emissions as much as possible, our framework is designed to \emph{guarantee} compliance with arbitrary required spectral limits. The resulting joint design tightly couples the spatial and spectral dimensions and is nonconvex in all variables; moreover, the transmit-side digital variables are optimized pointwise for each realized symbol tuple, whereas the receiver-side analog and digital combiners remain symbol-agnostic and are updated from batch/sample-average statistics. To handle this coupling in a principled and efficient manner, we propose an MMSE-based BCD algorithm that alternates over the variable blocks.
	
	\item The digital transmit-side update leads to a large-scale convex problem. Its feasible set is shaped by the transmit-power budget, the per-antenna spectral-mask constraints, the clipping limits, and the linear consistency relations that connect the frequency- and time-domain variables as well as the per-antenna and RF-chain-domain representations. To address this, we express the transmit-side step using four coupled variable blocks and develop a low-complexity, scalable ADMM algorithm. This decomposition yields four simpler subproblems, several of which can be handled in parallel across antennas or subcarriers, and each can be solved either in closed-form or through a simple bisection search.
	
	\item The RF-precoder update (partially-connected) and the users' analog-combiner updates (fully- or partially-connected) remain nonconvex, even though the objective is convex with respect to each of these blocks. The nonconvexity is driven by the unit-modulus constraint imposed by every PS. To optimize the PS values at both the transmitter and the users, we deploy a coordinate descent strategy that updates a single PS while keeping the others fixed. For each PS, we derive a closed-form update rule, enabling scalable and low-cost implementations for large arrays.
	
	\item We further address the practical regime in which PSs are impaired and exact phase settings are difficult to implement. Under random phase errors, we propose robust PS optimization rules at the transmitter and the users, and derive closed-form updates that directly minimize the expected MSE in the presence of these imperfections. Finally, we also obtain closed-form solutions for the users' digital combiners.

\item In the overall BCD procedure, all variable blocks are updated in a Gauss--Seidel fashion until convergence. We provide theoretical support for the convergence behavior of the proposed updates for each subproblem, as well as for the convergence of the overall BCD routine. We also analyze the computational complexity of the proposed methods. Extensive simulations then demonstrate the practical impact of our approach, showing effective suppression of OOB emissions even under aggressive spectral masks and clipping, along with clear achievable-rate improvements over well-known benchmark schemes.
\end{itemize}

The rest of the paper is organized as follows. Section~\ref{sec:system} introduces the system model. Section~\ref{sec:probform} presents the problem formulation. Section~\ref{sec:alg} provides the proposed solution, including reformulations to handle the inherent complexities, the algorithmic decomposition and the corresponding optimization methods. Simulation results are reported and discussed in Section~\ref{sec:sim}. Finally, Section~\ref{sec:con} concludes the paper.
	
\section{System Model}
In this section, we define and study different components of the system model.
\label{sec:system}

\subsubsection{Hybrid RF-Digital Systems}

We consider the downlink of an MU-MIMO-OFDM system in which a BS with $N_t$ transmit antennas serves $K$ users over $S$ subcarriers. The subcarrier index set is defined as $\mathcal{S}=\{0,1,\ldots,S-1\}$. Each user is equipped with $N_r$ receive antennas and is scheduled on all subcarriers. A partially-connected hybrid architecture is assumed at the transmitter, where each antenna is connected to one RF chain and $N_{\text{RF}}$ denotes the number of RF chains. The $N_t$ antennas are partitioned into $N_{\text{RF}}$ disjoint subarrays of identical size, each containing $N_t/N_{\text{RF}}\in \mathbb{Z}$ antennas.

The BS applies an RF precoder (by PSs) in conjunction with per-subcarrier digital precoding. The RF precoding matrix is denoted by $\mathbf{V}_{\text{RF}}\in\mathbb{C}^{N_t \times N_{\text{RF}} }$. Owing to the partially-connected constraint, every row of $\mathbf{V}_{\text{RF}}$ contains exactly one non-zero complex entry with unit modulus. In particular, if the $a^\text{th}$ antenna is connected to the $m_a^\text{th}$ RF chain, then $\mathbf{V}_{\text{RF}}[a,m_a]=e^{\jmath \varphi_{a m_a}}$, where $\jmath=\sqrt{-1}$ and $\varphi_{a m_a}$ is the applied phase rotation. For subcarrier $s$, the digital precoder used to deliver $n_k$ streams to user $k$ is $\mathbf{V}_{k}^s\in\mathbb{C}^{N_{\text{RF}} \times n_k }$, and the corresponding transmitted symbol vector is $\boldsymbol{\omega}_{k}^s \in \mathbb{C}^{n_k \times 1}$. We model $\boldsymbol{\omega}_{k}^s\in \boldsymbol{\Omega}$ as a zero-mean normalized QAM symbol vector with $\mathbb E[\boldsymbol{\omega}_j^s(\boldsymbol{\omega}_j^s)^H]=\mathbf I_{n_j}$ and $\mathbb E[\boldsymbol{\omega}_j^s(\boldsymbol{\omega}_\ell^s)^H]=\mathbf 0$ for $j\neq \ell$. Once realized, the tuple $\boldsymbol{\omega}\triangleq\{\boldsymbol{\omega}_k^s\}_{k,s}$ is assumed known at the transmitter, which allows symbol-dependent digital precoding. We consider a batch of $B$ independent symbol realizations $\{\boldsymbol{\omega}^{(b)}\}_{b=1}^B$. Define the realized transmit vector on subcarrier \(s\) as $
\mathbf t^s \triangleq \sum_{j=1}^K \mathbf V_j^s\boldsymbol\omega_j^s \in \mathbb C^{N_{\text{RF}} }.$ We write $\mathbf V_k^s=\mathbf V_k^s(\boldsymbol{\omega})$, $\mathbf{V}_{\text{RF}}=\mathbf{V}_{\text{RF}}(\boldsymbol{\omega})$,  and $\mathbf t^s=\mathbf t^s(\boldsymbol{\omega})$, and the argument is suppressed except where an expectation over $\boldsymbol{\omega}$ is taken.

Following the linear model, the received signal $\mathbf{y}_{k}^s \in \mathbb{C}^{N_r \times 1}$ at user $k$ on subcarrier $s$ is expressed as
\begin{equation}
	\mathbf{y}_k^s = \mathbf{H}_k^s\mathbf{V}_{\text{RF}} \sum_{j=1}^K \mathbf{V}_j^s \boldsymbol{\omega}_j^s + \mathbf{n}_k^s = \mathbf{H}_k^s\mathbf{V}_{\text{RF}} \mathbf t^s  + \mathbf{n}_k^s,\label{eq:recsig}
\end{equation}
where $\mathbf{H}_k^s \in \mathbb{C}^{N_r \times N_t}$ denotes the channel matrix from the BS to user $k$ on subcarrier $s$, and $\mathbf{n}_k^s \in \mathbb{C}^{N_r \times 1}$ is additive white complex Gaussian noise distributed as $\mathcal{CN}(\mathbf{0}, \sigma_{\text{noise},k}^{s\,2}\mathbf{I}_{N_r})$.

At the receiver, user $k$ employs an analog combiner followed by a digital combiner. The decoded signal vector on subcarrier $s$ at the $k^\text{th}$ user is given by $\hat{\boldsymbol{\omega}}_k^s = \mathbf{U}_k^{s^H}\mathbf{U}_{\text{RF},k}^{H} \mathbf{y}_k^s,$ where $\mathbf{U}_{k}^s\in\mathbb{C}^{N_{\text{RF},k} \times n_k }$ and $\mathbf{U}_{\text{RF},k}\in\mathbb{C}^{N_r \times N_{\text{RF},k} }$ represent the digital and analog combining matrices, respectively. Since the receiver does not know the realized tuple $\boldsymbol{\omega}$, the hybrid combiners $\mathbf U_{\text{RF},k}$ and $\mathbf U_k^s$ are taken to be symbol-agnostic. Let $\mathcal E_k\subseteq \{1,\ldots,N_r\}\times\{1,\ldots,N_{\text{RF},k}\}$ denote the set of feasible nonzero entries of $\mathbf U_{\text{RF},k}$. Then, $\mathbf U_{\text{RF},k}[a,m]=e^{\jmath\varphi_{k,am}}$ for $(a,m)\in\mathcal E_k$, and $\mathbf U_{\text{RF},k}[a,m]=0$ for $(a,m)\notin\mathcal E_k$. We consider an arbitrary (partially or fully-connected) architecture for the combiner matrix of the receiver.

\subsubsection{Power Constraint for Hybrid Precoding}
The BS transmit budget on each subcarrier is limited according to
\begin{align}
	& \quad \norm{\mathbf{V}_{\text{RF}} \mathbf t^s }_2^2 \leq P^s.\label{eq:rfpowerbudget}
\end{align}
where $P^s$ denotes the total power available at the BS on subcarrier $s$.

In this paper, in addition to the conventional transmit-power limitation, we also incorporate clipping and spectral mask constraints that arise in practical MIMO-OFDM deployments.

\subsubsection{Clipping Constraint}

With an oversampling factor $\ell$, the CP-inclusive OFDM symbol emitted by the $a^\text{th}$ antenna is described in discrete time over
$n\in\{-\ell N_{\mathrm{CP}},-\ell N_{\mathrm{CP}}+1,\ldots,\ell S-1\}$. Over this interval, the transmitted samples follow \cite{van2009sculpting}
\begin{equation}
	\tilde{x}_{\text{CP}}^a[n] = \mathbf{p}^T[n]\mathbf{g}^a
	= \sum_{s\in \mathcal{S}} p^{s}[n] \mathbf{g}^a[s],\label{eq:ofdm_sym}
\end{equation}
where $\mathbf{p}[n] = [p^{0}[n], \ldots, p^{S-1}[n]]^T$ collects the subcarrier-modulated pulses (sampled on the $\ell S$-point grid) and $\mathbf{g}^a\in\mathbb{C}^{S}$ stacks the corresponding precoded frequency-domain symbols associated with the $a^\text{th}$ antenna. For the hybrid-precoding architecture, the precoded symbol vector is
\begin{equation}
	\mathbf g^a
	=
	\big[\mathbf V_{\mathrm{RF}}[a,:]\mathbf t^0,\;\ldots,\;\mathbf V_{\mathrm{RF}}[a,:]\mathbf t^{S-1}\big]^T
	= \mathbf V_{\mathrm{RF}}[a,m_a]\mathbf w^a,
\end{equation}
where $m_a$ is the (unique) RF-chain index connected to antenna $a$ (i.e., $\mathbf{V}_\text{RF}[a,m_a]\neq 0$), and
\begin{align}
	\mathbf{w}^a[s]=\mathbf t^s[m_a], \quad \forall a,\ \forall s\in\mathcal S. \label{eq:wa_def}
\end{align}
Hereafter, $\mathbf V_k^s$, $\mathbf w^a$, and $\mathbf g^a$ all implicitly depend on the realized tuple $\boldsymbol{\omega}$, and we suppress that dependence to keep the notation easy to read.

The discrete-time OFDM pulse with subcarrier modulation for the $s^\text{th}$ subcarrier is defined as \cite{van2009sculpting,9214877,9390405}
\begin{equation}
	p^s[n] = \frac{1}{\sqrt{\ell S}}\,e^{\jmath 2\pi \frac{s}{\ell S} n}\, I[n], \:\:\: s\in \mathcal{S},\label{eq:timedomain}
\end{equation}
where $I[n]$ is the indicator function given by $I[n] = 1$ for $-\ell N_{\mathrm{CP}} \leq n \leq \ell S-1$ and $I[n] = 0$ otherwise, with $N_{\mathrm{CP}}$ representing the CP length (in samples). The rectangular window assumption provides a worst-case characterization of OOB emissions since it yields the slowest spectral decay among common pulse shapes; thus, any suppression demonstrated here is expected to be at least as good when band-limited pulses are employed.

Fix $a\in\{1,\ldots,N_t\}$ and consider an $\ell S$-point IDFT grid. Define the (oversampled) IDFT matrix
$\mathbf F_{\ell S}^{H}\in\mathbb C^{\ell S\times S}$ with entries
\begin{equation}
	\mathbf F_{\ell S}^{H}[n,s]
	\triangleq \frac{1}{\sqrt{\ell S}}e^{\jmath 2\pi \frac{ns}{\ell S}},
	\qquad n\in\{0,1,\ldots,\ell S-1\},\ \ s\in\mathcal{S},
	\nonumber
\end{equation}
so that the useful (no-CP) time-domain block $\tilde{\mathbf x}^{a}\in\mathbb C^{\ell S}$ is
\begin{equation}
	\tilde{\mathbf x}^{a}=\mathbf F_{\ell S}^{H}\mathbf g^{a}
	= \mathbf{V}_\text{RF}[a,m_a]\mathbf F_{\ell S}^{H}\mathbf w^{a}.
	\nonumber
\end{equation}
 Let \(\chi\) denote the clipping level. It is desired that $|\tilde{x}^{a}[n]| \leq \chi$ for \(0 \leq n \leq \ell S-1\) or equivalently $ \| \tilde{\mathbf{x}}^a \|_{\infty} \leq \chi$, so as to avoid signal clipping. Note that since $| \mathbf{V}_\text{RF}[a,m_a]|=1$, then $ \| \tilde{\mathbf{x}}^a \|_{\infty} = \| \mathbf F_{\ell S}^{H}\mathbf w^{a} \|_{\infty}$. Since the CP is formed by copying the last $\ell N_{\mathrm{CP}}$ samples of the useful OFDM block, enforcing $|\tilde{x}^a[n]|\le \chi$ for $0\le n\le \ell S-1$ automatically guarantees the same bound on the CP samples.

\subsubsection{Spectral Mask Constraint}

In MIMO-OFDM, the sidelobes of the subcarrier waveforms spread energy outside the intended band, potentially causing adjacent-channel interference and violating regulatory emission masks. Spectral mask constraints address this by enforcing suppression at selected out-of-band frequencies, particularly near the band edges.

Fix an oversampling factor $\ell$ and consider the CP-inclusive transmit pulse $p^{s}[n]$ in \eqref{eq:timedomain} over
$n\in\{-\ell N_{\mathrm{CP}},\ldots,\ell S-1\}$. Let $F_{s,\ell}$ denote the sampling rate of the oversampled discrete-time
waveform. For spectral sampling, introduce a set of (possibly non-integer) locations
$\{\gamma_i\}_{i=0}^{M-1}\subset\mathbb R$ expressed in DFT-bin units on the $\ell S$ grid; the location $\gamma_i$
corresponds to the tangible baseband frequency $f_i=\frac{\gamma_i}{\ell S}F_{s,\ell}$ (Hz). Integer $\gamma_i$ coincide with
$\ell S$-point DFT-bin centers, whereas non-integer $\gamma_i$ evaluate the DFT between bins.

Using the rectangular window in \eqref{eq:timedomain}, define the sampling matrix $\mathbf A\in\mathbb C^{M\times S}$ by
evaluating that formula at each $\gamma_i$, i.e.,
\begin{align}
	\mathbf A[i,s]
	=&
	\frac{1}{\sqrt{\ell S}}
	\exp\!\left(\jmath\pi\frac{\gamma_i-s}{\ell S}\,(\ell N_{\mathrm{CP}}-\ell S+1)\right)\nonumber\\
	&\times
	\frac{\sin\!\left(\pi\frac{\gamma_i-s}{\ell S}(\ell S+\ell N_{\mathrm{CP}})\right)}
	{\sin\!\left(\pi\frac{\gamma_i-s}{\ell S}\right)},
	\qquad \gamma_i-s\notin \ell S\,\mathbb Z,
	\label{eq:A_closed}
\end{align}
and $\mathbf A[i,s]=L/\sqrt{\ell S}$ when $\gamma_i-s\in \ell S\,\mathbb Z$, with $L\triangleq \ell S+\ell N_{\mathrm{CP}}$ \cite{9214877,van2009sculpting}.

For any given antenna $a$, \eqref{eq:ofdm_sym}--\eqref{eq:timedomain} imply that the CP-inclusive spectrum evaluated at the sampled
locations satisfies
\begin{align}
	X^a(\gamma_i)
	\;&\triangleq\;
	\sum_{n=-\ell N_{\mathrm{CP}}}^{\ell S-1} \tilde{x}_{\text{CP}}^a[n]\;e^{-\jmath 2\pi \frac{\gamma_i}{\ell S}n}
	\;=\;
	\sum_{s\in\mathcal S}\mathbf g^a[s]\;\mathbf A[i,s]
	\;\nonumber\\
	&=\;
	\mathbf A[i,:]\mathbf g^a,
	\qquad i=0,\ldots,M-1,
	\nonumber
\end{align}
and a (single-symbol) periodogram-type PSD sample at $\gamma_i$ is
\begin{equation}
	\widehat S_{\tilde{x}_{\text{CP}}^a \tilde{x}_{\text{CP}}^a}(\gamma_i)
	\;\triangleq\;
	\frac{1}{L\,F_{s,\ell}}\bigl|X^a(\gamma_i)\bigr|^2
	\;=\;
	\frac{1}{L\,F_{s,\ell}}\bigl|\mathbf A[i,:]\mathbf g^a\bigr|^2.
	\label{eq:PSD_inst_A}
\end{equation}
In the following, we use the single-symbol spectrum sample (periodogram-type) at frequency location $\gamma$,
$\widehat S_{\tilde{x}_{\text{CP}}^a \tilde{x}_{\text{CP}}^a}(\gamma)= \frac{1}{L F_{s,\ell}}|X^a(\gamma)|^2$,
as a measure of spectral leakage. The emission-mask constraints are enforced directly on these samples at a finite set of mask frequencies.
Under $\mathbf g^a=\mathbf V_{\text{RF}}[a,m_a]\mathbf w^a$, one has
$X^a(\gamma_i)=\mathbf V_{\text{RF}}[a,m_a]\;\mathbf A[i,:]\mathbf w^a$ and therefore
$|X^a(\gamma_i)|^2 = |\mathbf V_{\text{RF}}[a,m_a]|^2\,|\mathbf A[i,:]\mathbf w^a|^2$; in particular, if
$|\mathbf V_{\text{RF}}[a,m_a]|=1$, then the PSD samples in \eqref{eq:PSD_inst_A} are unaffected by the analog precoder's magnitude. Hence, without loss of generality, we define $\mathbf x^a \triangleq \mathbf F_{\ell S}^{H}\mathbf w^a$ and impose clipping constraint using $\mathbf x^a$, since $\| \tilde{\mathbf x}^a \|_{\infty}=\| \mathbf{x}^a \|_{\infty}$.

The spectral mask is imposed by constraining the emitted spectrum at a finite set of mask frequencies
$\{f_1,\ldots,f_G\}$. These mask frequencies are different from the generic sampling locations
$\{\gamma_i\}_{i=0}^{M-1}$ used for PSD evaluation/visualization: the latter can be chosen dense over a wide span to inspect the
spectrum, whereas $\{f_j\}$ is a design set used only to enforce compliance. Following \cite{van2009sculpting}, the mask points
are placed tightly near band edges and/or around interference regions; closely-spaced samples in these critical areas provide an
effective discrete surrogate of continuous mask constraints.

Map each physical mask frequency $f_j$ to its DFT-bin location on the $\ell S$ grid via
$\gamma_j \triangleq \frac{\ell S}{F_{s,\ell}}f_j$. Let $\mathbf A_n\in\mathbb C^{G\times S}$ denote the matrix
obtained by evaluating \eqref{eq:A_closed} at $\{\gamma_j\}_{j=1}^{G}$ (equivalently, selecting the corresponding rows of
$\mathbf A$ when $\{\gamma_j\}\subset\{\gamma_i\}$). Then, under hybrid precoding, $
X^a(\gamma_j)=\mathbf V_{\text{RF}}[a,m_a]\;\mathbf A_n[j,:]\mathbf w^a,
\: j=1,\ldots,G.$
The mask constraints at antenna $a$ are imposed as
\begin{equation}
	\frac{\bigl|\mathbf A_n\mathbf w^a\bigr|^2}{L\,F_{s,\ell}} \preceq \frac{\mathbf r}{L\,F_{s,\ell}},
	\qquad \mathbf r=[r_1,\ldots,r_G]^T,
	\label{eq:mask_constraint}
\end{equation}
where $\mathbf r[j] \triangleq L\,F_{s,\ell}\,S_{\text{max}}(f_j)$, $S_{\text{max}}(f_j)$ is the maximum allowable PSD and $\mathbf r[j]$ denotes one sample at $f_j$, i.e., $|X^a(\gamma_j)|^2 \le \mathbf r[j]$. Equivalently, \eqref{eq:mask_constraint} requires designing the digital precoders $\{\mathbf V_k^s\}$ such that the resulting per-antenna vector $\mathbf w^a$ yields OFDM symbols whose spectral samples satisfy the prescribed limits at $\{f_1,\ldots,f_G\}$.

\section{Problem Formulation}
\label{sec:probform}
The objective of this paper is to minimize the sum-MSE of the decoded symbol vectors aggregated over all users and
subcarriers, under the assumption that every user is served on every $s\in\mathcal S$. In particular, the estimation error of
user $k$ on subcarrier $s$ is defined as
\begin{equation}
	\mathbf e_k^s \triangleq \hat{\boldsymbol{\omega}}_k^s-\boldsymbol{\omega}_k^s
	= \mathbf{U}_k^{s^H}\mathbf{U}_{\text{RF},k}^{H}\mathbf{y}_k^s-\boldsymbol{\omega}_k^s,
\end{equation}
where $\hat{\boldsymbol{\omega}}_k^s=\mathbf{U}_k^{s^H}\mathbf{U}_{\text{RF},k}^{H} \mathbf{y}_k^s$ and $\mathbf y_k^s$ is given in
\eqref{eq:recsig}. We then consider the MMSE criterion $ \sum_{s\in\mathcal S}\sum_{k=1}^K \mathbb{E}_{\boldsymbol{\omega},\mathbf n}\!\left[\left\|\mathbf e_k^s\right\|_2^2\right]$, where the expectation is taken with respect to the additive noise $\{\mathbf n_k^s\}$ and the empirical distribution of the symbol tuple $\boldsymbol{\omega}$ induced by the fixed batch $\{\boldsymbol{\omega}^{(b)}\}_{b=1}^B$. At the same time, the receiver-side hybrid combiners remain symbol-agnostic, whereas the digital transmit-side variables are designed pointwise once $\boldsymbol{\omega}$ is realized.

We formulate the joint hybrid precoding problem to minimize the sum-MSE, subject to transmit power budget, clipping, and
spectral mask constraints as follows:
\begin{subequations}\label{opt:main}
	\begin{alignat}{2}
		\min_{\substack{\{\mathbf V_k^s\},\,\mathbf V_{\text{RF}},\,\{\mathbf U_k^s\},\\[2pt] \{\mathbf U_{\text{RF},k}\},\,\{\mathbf x^a\},\,\{\mathbf w^a\}}}
		&\quad J \triangleq \sum_{s\in\mathcal S}\sum_{k=1}^K \mathbb{E}_{\boldsymbol{\omega},\mathbf n}\!\left[\left\|\mathbf e_k^s\right\|_2^2\right] \nonumber\\
		\text{s.t.}\quad
		& \eqref{eq:rfpowerbudget},\ \eqref{eq:wa_def},\ \eqref{eq:mask_constraint}, \nonumber\\
		& \mathbf{x}^a = \mathbf F_{\ell S}^{H} \mathbf{w}^a, &\quad& \forall a, \label{eq:xa}\\
		& \| \mathbf{x}^a \|_{\infty} \leq \chi, &\quad& \forall a. \label{eq:clipping_constraint_digital}
	\end{alignat}
\end{subequations}
The constraints are instantaneous and are therefore enforced pointwise in $\boldsymbol{\omega}$, while the objective is averaged over $\boldsymbol{\omega}$ and $\mathbf n$.

\section{The Proposed BCD-Based Algorithm}
\label{sec:alg}
Note that $J$ can be written as $J=\sum_{s,k}\mathbb E_{\boldsymbol{\omega}}[\|\mathbf U_k^{s^H}\mathbf U_{\mathrm{RF},k}^H\mathbf H_k^s\mathbf V_{\mathrm{RF}}\mathbf t^s-\boldsymbol{\omega}_k^s\|_2^2]+\sum_{s,k}\sigma_{\mathrm{noise},k}^{s\,2}\tr(\mathbf{U}_k^{s^H}\mathbf{U}_{\text{RF},k}^H\mathbf{U}_{\text{RF},k}\mathbf{U}_k^s)$. At each BCD iteration, the $B$ transmit-side subproblems are independent across realizations and solved in parallel via the ADMM of Section~\ref{subsec:admm_blocks_revised_hyb}; the resulting $\{\mathbf t^s(\boldsymbol{\omega}^{(b)})\}_{b=1}^B$ are used to form the sample-average covariances for the combiner update.  Then, for a fixed realized tuple $\boldsymbol{\omega}$,
\begin{align}
	&	\mathbf E_k^s
	\triangleq \mathbb{E}_{\mathbf n}\!\left[(\hat{\boldsymbol{\omega}}_k^s-\boldsymbol{\omega}_k^s)(\hat{\boldsymbol{\omega}}_k^s-\boldsymbol{\omega}_k^s)^H\right]\nonumber\\
	&=
	\mathbf{U}_k^{s^H}\mathbf{U}_{\text{RF},k}^{H}\mathbf{H}_k^s\mathbf{V}_{\text{RF}}\mathbf t^s(\mathbf t^s)^H
	\mathbf V_{\text{RF}}^{H}(\mathbf H_k^s)^{H}\mathbf U_{\text{RF},k}\mathbf U_k^s\nonumber\\
	&-\mathbf{U}_k^{s^H}\mathbf{U}_{\text{RF},k}^{H}\mathbf{H}_k^s\mathbf{V}_{\text{RF}}\mathbf t^s(\boldsymbol{\omega}_k^s)^H
	-\boldsymbol{\omega}_k^s(\mathbf t^s)^H\mathbf V_{\text{RF}}^{H}(\mathbf H_k^s)^{H}\mathbf U_{\text{RF},k}\mathbf U_k^s\nonumber\\
	&+\boldsymbol{\omega}_k^s(\boldsymbol{\omega}_k^s)^H
	+\sigma_{\text{noise},k}^{s\,2}\mathbf{U}_k^{s^H}\mathbf{U}_{\text{RF},k}^H\mathbf{U}_{\text{RF},k}\mathbf{U}_k^s.
	\label{eq:E_det_t_expand}
\end{align}
With the above reformulation, \eqref{opt:main} is equivalently written as
\begin{align}
	\min_{\substack{\{\mathbf U_k^s\},\,\{\mathbf t^s\},\,\mathbf{V}_{\text{RF}},\\[2pt] \{\mathbf U_{\text{RF},k}\},\,\{\mathbf{x}^a\},\,\{\mathbf{w}^a\}}}
	&\quad \sum_{s\in\mathcal S}\sum_{k=1}^K \mathbb E_{\boldsymbol{\omega}}\!\left[\tr(\mathbf E_k^s)\right] \label{opt:main-revised}\\
	\text{s.t.}\quad
	&\ \eqref{eq:rfpowerbudget},\ \eqref{eq:wa_def},\ \eqref{eq:mask_constraint},\ \eqref{eq:xa},\ \eqref{eq:clipping_constraint_digital},\nonumber
\end{align}
where $\text{tr}(\cdot)$ is the trace operator. Here, $\mathbf E_k^s$ is the noise-averaged MSE matrix for the current realization $\boldsymbol{\omega}$, so the remaining expectation is only over $\boldsymbol{\omega}$.

\subsection{A Four-Block ADMM Reformulation without Explicit $\{\mathbf V_k^s\}$}
\label{subsec:admm_blocks_revised_hyb}

For fixed combiners $\{\mathbf U_k^s\}$ and $\{\mathbf U_{\mathrm{RF},k}\}$, the transmit-side subproblem is solved independently for each realization $\boldsymbol{\omega}^{(b)}$, $b=1,\ldots,B$, yielding the corresponding $\mathbf t^{s(b)}$, $\mathbf x^{a(b)}$, $\mathbf w^{a(b)}$, and $\mathbf V_{\mathrm{RF}}^{(b)}$; the $B$ instances may be solved in parallel. In what follows, we drop the superscript $(b)$ from $\mathbf t^{s(b)}$, $\mathbf x^{a(b)}$, $\mathbf w^{a(b)}$, and $\mathbf V_{\mathrm{RF}}^{(b)}$ to simplify the notation, and present the solution for a single realization $\boldsymbol{\omega}$.

For the partially connected architecture, $\mathbf w^a[s]=\mathbf t^s[m_a]$ for all $a$ and $s$. Moreover, because the nonzero entries of
$\mathbf V_{\mathrm{RF}}$ have unit modulus and each RF chain is connected to $N_t/N_{\mathrm{RF}}$ antennas, we have \cite{7913599} $
\mathbf V_{\mathrm{RF}}^H\mathbf V_{\mathrm{RF}}=\frac{N_t}{N_{\mathrm{RF}}}\mathbf I_{N_{\mathrm{RF}}}.$
Hence, the inner problem may be expressed entirely through $\{\mathbf t^s\}$, $\{\mathbf w^a\}$, and $\{\mathbf x^a\}$. To handle the spectral-mask
constraint, define $\mathbf q^a \triangleq \mathbf A_n\mathbf w^a \in \mathbb C^{G}$ for $a=1,\ldots,N_t$. Thus, the inner problem is written in
terms of $\{\mathbf t^s\}$, $\{\mathbf w^a\}$, $\{\mathbf x^a\}$, and $\{\mathbf q^a\}$. Once $\{\mathbf t^s\}$ is available, a compatible family
$\{\mathbf V_k^s\}$ can be reconstructed afterward. We also introduce two quadratic regularization
terms with parameters $\eta_w>0$ and $\eta_t>0$.

For fixed $\mathbf V_{\mathrm{RF}}$, $\{\mathbf U_{\mathrm{RF},k}\}$, and $\{\mathbf U_k^s\}$, let
\[
\mathbf B_k^s \triangleq \mathbf U_k^{s^H}\mathbf U_{\mathrm{RF},k}^H\mathbf H_k^s\mathbf V_{\mathrm{RF}}
\in \mathbb C^{n_k\times N_{\mathrm{RF}}},\qquad \forall k,s.
\]
The stacked variables are
$\mathbf T \triangleq [\mathbf t^0 \cdots \mathbf t^{S-1}]\in\mathbb C^{N_{\mathrm{RF}}\times S}$,
$\mathbf W \triangleq [\mathbf w^1 \cdots \mathbf w^{N_t}]^T \in \mathbb C^{N_t\times S}$,
$\mathbf X \triangleq [\mathbf x^1 \cdots \mathbf x^{N_t}]^T \in \mathbb C^{N_t\times \ell S}$, and
$\mathbf Q \triangleq [\mathbf q^1 \cdots \mathbf q^{N_t}]^T \in \mathbb C^{N_t\times G}$. By construction, $
\mathbf W[a,:]=\mathbf T[m_a,:], \forall a=1,\ldots,N_t.$
Since $\mathbf W$ is arranged row-wise, we introduce the operators $
\mathfrak F(\mathbf W)\triangleq \mathbf W(\mathbf F_{\ell S}^{H})^{T}\in\mathbb C^{N_t\times \ell S}$, $
\mathfrak A(\mathbf W)\triangleq \mathbf W\mathbf A_n^{T}\in\mathbb C^{N_t\times G}.$

Next, define the closed convex sets
\begin{align}
	\mathcal M &\triangleq \Big\{\mathbf Q\in\mathbb C^{N_t\times G}: |\mathbf Q[a,j]|^2\le r_j,\ \forall a,j\Big\},\nonumber\\
	\mathcal C &\triangleq \Big\{\mathbf X\in\mathbb C^{N_t\times \ell S}: |\mathbf X[a,n]|\le \chi,\ \forall a,n\Big\},\nonumber\\
	\mathcal P &\triangleq \Big\{\mathbf T\in\mathbb C^{N_{\mathrm{RF}}\times S}: \frac{N_t}{N_{\mathrm{RF}}}\|\mathbf T[:,s]\|_2^2\le P^s,\ \forall s\in\mathcal S\Big\}.\nonumber
\end{align}
For any closed convex set $\mathcal Z$, let $\delta_{\mathcal Z}(\mathbf Z)$ denote its indicator function, equal to $0$ if $\mathbf Z\in\mathcal Z$ and $+\infty$ otherwise.

With $\mathbf V_{\mathrm{RF}}$, $\{\mathbf U_{\mathrm{RF},k}\}$, and $\{\mathbf U_k^s\}$ fixed, the regularized inner problem becomes
\begin{align}
	\min_{\mathbf Q,\mathbf X,\mathbf W,\mathbf T}\quad
	&\theta_1(\mathbf Q)+\theta_2(\mathbf X)+\theta_3(\mathbf W)+\theta_4(\mathbf T)
	\label{opt:inner4block_hyb}\\
	\text{s.t.}\quad
	&\mathbf Q-\mathfrak A(\mathbf W)=\mathbf 0,\nonumber\\
	&\mathbf X-\mathfrak F(\mathbf W)=\mathbf 0,\nonumber\\
	&\mathbf W[a,:]-\mathbf T[m_a,:]=\mathbf 0,\qquad \forall a,\nonumber
\end{align}
where $\theta_1(\mathbf Q)\triangleq \delta_{\mathcal M}(\mathbf Q)$, $\theta_2(\mathbf X)\triangleq \delta_{\mathcal C}(\mathbf X)$, $\theta_3(\mathbf W)\triangleq \frac{\eta_w}{2}\|\mathbf W\|_F^2$, and $\theta_4(\mathbf T)\triangleq \sum_{s\in\mathcal S}\sum_{k=1}^K \|\mathbf B_k^s\mathbf T[:,s]-\boldsymbol\omega_k^s\|_2^2 +\delta_{\mathcal P}(\mathbf T)+\frac{\eta_t}{2}\|\mathbf T\|_F^2$.

Let $\langle \mathbf A,\mathbf B\rangle \triangleq \Re\{\tr(\mathbf A^H\mathbf B)\}$. Introducing the dual matrices
$\boldsymbol\Lambda_q\in\mathbb C^{N_t\times G}$, $\boldsymbol\Lambda_x\in\mathbb C^{N_t\times \ell S}$, and
$\boldsymbol\Lambda_w\in\mathbb C^{N_t\times S}$, the augmented Lagrangian is
\begin{align}
	&\mathcal L(\mathbf Q,\mathbf X,\mathbf W,\mathbf T,\boldsymbol\Lambda_q,\boldsymbol\Lambda_x,\boldsymbol\Lambda_w)=
	\theta_1(\mathbf Q)+\theta_2(\mathbf X)+\theta_3(\mathbf W)\nonumber\\
	& +\theta_4(\mathbf T)+\langle \boldsymbol\Lambda_q,\mathbf Q-\mathfrak A(\mathbf W)\rangle
	+\frac{\rho}{2}\|\mathbf Q-\mathfrak A(\mathbf W)\|_F^2\nonumber\\
	&+\langle \boldsymbol\Lambda_x,\mathbf X-\mathfrak F(\mathbf W)\rangle +\frac{\rho}{2}\|\mathbf X-\mathfrak F(\mathbf W)\|_F^2\nonumber\\
	&
	+\sum_{a=1}^{N_t}\left\langle (\boldsymbol\Lambda_w[a,:])^T,\ (\mathbf W[a,:])^T-(\mathbf T[m_a,:])^T\right\rangle\nonumber\\
	&+\frac{\rho}{2}\sum_{a=1}^{N_t}\|(\mathbf W[a,:])^T-(\mathbf T[m_a,:])^T\|_2^2.\nonumber
\end{align}
Accordingly, at iteration $\tau+1$, the cyclic ADMM steps are
\begin{equation}
	\label{eq:admm_4block_updates_hyb}
	\begin{split}
		&\mathbf Q^{\tau+1}
		=
		\arg\min_{\mathbf Q}\ \mathcal L\!\left(
		\mathbf Q,\mathbf X^\tau,\mathbf W^\tau,\mathbf T^\tau,
		\boldsymbol\Lambda_q^\tau,\boldsymbol\Lambda_x^\tau,\boldsymbol\Lambda_w^\tau
		\right),\\
		&\mathbf X^{\tau+1}
		=
		\arg\min_{\mathbf X}\ \mathcal L\!\left(
		\mathbf Q^{\tau+1},\mathbf X,\mathbf W^\tau,\mathbf T^\tau,
		\boldsymbol\Lambda_q^\tau,\boldsymbol\Lambda_x^\tau,\boldsymbol\Lambda_w^\tau
		\right),\\
		&\mathbf W^{\tau+1}
		=
		\arg\min_{\mathbf W}\ \mathcal L\!\left(
		\mathbf Q^{\tau+1},\mathbf X^{\tau+1},\mathbf W,\mathbf T^\tau,
		\boldsymbol\Lambda_q^\tau,\boldsymbol\Lambda_x^\tau,\boldsymbol\Lambda_w^\tau
		\right),\\
		&\mathbf T^{\tau+1}
		=
		\arg\min_{\mathbf T}\ \mathcal L\!\left(
		\mathbf Q^{\tau+1},\mathbf X^{\tau+1},\mathbf W^{\tau+1},\mathbf T,
		\boldsymbol\Lambda_q^\tau,\boldsymbol\Lambda_x^\tau,\boldsymbol\Lambda_w^\tau
		\right),\\
		&\boldsymbol\Lambda_q^{\tau+1}
		=
		\boldsymbol\Lambda_q^\tau+\rho\big(\mathbf Q^{\tau+1}-\mathfrak A(\mathbf W^{\tau+1})\big),\\
		&\boldsymbol\Lambda_x^{\tau+1}
		=
		\boldsymbol\Lambda_x^\tau+\rho\big(\mathbf X^{\tau+1}-\mathfrak F(\mathbf W^{\tau+1})\big),\\
		&(\boldsymbol\Lambda_w^{\tau+1}[a,:])^T
		=
		(\boldsymbol\Lambda_w^\tau[a,:])^T+\rho\big((\mathbf W^{\tau+1}[a,:])^T-(\mathbf T^{\tau+1}[m_a,:])^T\big).
	\end{split}
	\raisetag{-8pt}
\end{equation}

\subsubsection{Update of $\mathbf Q$}
\label{subsec:q_solution_hyb}

The $\mathbf Q$-subproblem reduces to the Euclidean projection onto $\mathcal M$:
\[
\mathbf Q^{\tau+1}
=
\operatorname{Proj}_{\mathcal M}\!\left(\mathfrak A(\mathbf W^\tau)-\frac{1}{\rho}\boldsymbol\Lambda_q^\tau\right).
\]
Equivalently, for every antenna $a$ and mask sample $j$,
\begin{align}
	&\mathbf q^{a,\tau+1}[j]
	=
	\min\!\Bigg(
	1,\frac{\sqrt{r_j}}{\left|\mathbf A_n[j,:]\mathbf w^{a,\tau}-\boldsymbol\Lambda_q^\tau[a,j]/\rho\right|}
	\Bigg)\label{eq:q_closed_hyb}\\
	&\qquad\times
	\left(\mathbf A_n[j,:]\mathbf w^{a,\tau}-\boldsymbol\Lambda_q^\tau[a,j]/\rho\right),
	\qquad j=1,\ldots,G,
	\nonumber
\end{align}
with the convention that $\mathbf q^{a,\tau+1}[j]=0$ whenever the term inside parentheses is zero.

\subsubsection{Update of $\mathbf X$}
\label{subsec:x_solution_hyb}

The $\mathbf X$-subproblem is likewise a Euclidean projection, now onto $\mathcal C$:
\[
\mathbf X^{\tau+1}
=
\operatorname{Proj}_{\mathcal C}\!\left(\mathfrak F(\mathbf W^\tau)-\frac{1}{\rho}\boldsymbol\Lambda_x^\tau\right).
\]
In elementwise form, for each antenna $a$ and sample index $n=0,\ldots,\ell S-1$,
\begin{align}
	&\mathbf x^{a,\tau+1}[n]
	=
	\min\!\Bigg(
	1,\frac{\chi}{\left|\mathbf F_{\ell S}^{H}[n,:]\mathbf w^{a,\tau}-\boldsymbol\Lambda_x^\tau[a,n]/\rho\right|}
	\Bigg)\nonumber\\
	&\qquad\times
	\left(\mathbf F_{\ell S}^{H}[n,:]\mathbf w^{a,\tau}-\boldsymbol\Lambda_x^\tau[a,n]/\rho\right).
	\label{eq:x_closed_hyb}
\end{align}

\subsubsection{Update of $\mathbf W$}
\label{subsec:w_solution_hyb}

The minimization with respect to $\mathbf W$ is unconstrained, strongly convex, and separable across antennas. For each
antenna, the first-order optimality condition gives
\begin{align}
	&\Big((\eta_w+\rho)\mathbf I_S+\rho\mathbf A_n^H\mathbf A_n+\rho\mathbf F_{\ell S}\mathbf F_{\ell S}^{H}\Big)\mathbf w^{a,\tau+1}
	\nonumber\\
	&=
	\mathbf A_n^H\!\big(\rho\mathbf q^{a,\tau+1}+(\boldsymbol\Lambda_q^\tau[a,:])^T\big)
	+\mathbf F_{\ell S}\!\big(\rho\mathbf x^{a,\tau+1}+(\boldsymbol\Lambda_x^\tau[a,:])^T\big)\nonumber\\
	&\quad+\rho(\mathbf T^\tau[m_a,:])^T-(\boldsymbol\Lambda_w^\tau[a,:])^T.
	\nonumber
\end{align}
Using $\mathbf F_{\ell S}\mathbf F_{\ell S}^{H}=\mathbf I_S$, this simplifies to
\begin{align}
	&\mathbf M_w\,\mathbf w^{a,\tau+1}
	=
	\mathbf A_n^H\!\big(\rho\mathbf q^{a,\tau+1}+(\boldsymbol\Lambda_q^\tau[a,:])^T\big)
	\label{eq:w_normal_per_antenna_hyb}\\
	&+\mathbf F_{\ell S}\!\big(\rho\mathbf x^{a,\tau+1}+(\boldsymbol\Lambda_x^\tau[a,:])^T\big)+\rho(\mathbf T^\tau[m_a,:])^T-(\boldsymbol\Lambda_w^\tau[a,:])^T,
	\nonumber
\end{align}
where $\mathbf M_w \triangleq (\eta_w+2\rho)\mathbf I_S+\rho\mathbf A_n^H\mathbf A_n\in\mathbb C^{S\times S}$.

Since $\mathbf M_w$ is the same for all antennas, its factorization can be computed once and reused for all $\{\mathbf w^{a,\tau+1}\}_{a=1}^{N_t}$. Moreover, when $G\ll S$, letting $\upsilon\triangleq \eta_w+2\rho$, one may use the matrix inversion lemma:
\[
\mathbf M_w^{-1}
=
\upsilon^{-1}\mathbf I_S-\upsilon^{-2}\mathbf A_n^H\big(\rho^{-1}\mathbf I_G+\upsilon^{-1}\mathbf A_n\mathbf A_n^H\big)^{-1}\mathbf A_n.
\]
Hence, $\mathbf w^{a,\tau+1}=\mathbf M_w^{-1}\mathbf b^{a,\tau}$ can be computed by inverting a $G\times G$ matrix instead of an $S\times S$ matrix, where $\mathbf b^{a,\tau}$ denotes the right-hand side of \eqref{eq:w_normal_per_antenna_hyb}. This is exact and is particularly attractive when $G$ is much smaller than $S$.

\subsubsection{Update of $\mathbf T$}
\label{subsec:t_solution_hyb}

The $\mathbf T$-subproblem is strongly convex and separates over $s\in\mathcal S$. For each $s\in\mathcal S$,
\begin{align}
	\min_{\mathbf t^s}\quad
	&\sum_{k=1}^{K}\|\mathbf B_k^s\mathbf t^s-\boldsymbol\omega_k^s\|_2^2
	+\frac{\eta_t}{2}\|\mathbf t^s\|_2^2
	\label{opt:t_sub_hyb}\\
	&+\sum_{a=1}^{N_t}\Re\!\left\{(\boldsymbol\Lambda_w^\tau[a,s])^*\big(\mathbf W^{\tau+1}[a,s]-\mathbf t^s[m_a]\big)\right\}\nonumber\\
	&+\frac{\rho}{2}\sum_{a=1}^{N_t}\big|\mathbf W^{\tau+1}[a,s]-\mathbf t^s[m_a]\big|^2
	\quad
	\text{s.t.}\quad
	\frac{N_t}{N_{\mathrm{RF}}}\|\mathbf t^s\|_2^2\le P^s .\nonumber
\end{align}
Ignoring the power constraint, the unique minimizer satisfies
\begin{align}
	&\Bigg(2\sum_{k=1}^{K}(\mathbf B_k^s)^H\mathbf B_k^s+\Big(\eta_t+\rho\frac{N_t}{N_{\mathrm{RF}}}\Big)\mathbf I_{N_{\mathrm{RF}}}\Bigg)\mathbf t^s
	\label{eq:t_normal_hyb}\\
	&=
	2\sum_{k=1}^{K}(\mathbf B_k^s)^H\boldsymbol\omega_k^s
	+
	\begin{bmatrix}
		\sum_{a:m_a=1}\big(\rho\mathbf W^{\tau+1}[a,s]+\boldsymbol\Lambda_w^\tau[a,s]\big)\\
		\vdots\\
		\sum_{a:m_a=N_{\mathrm{RF}}}\big(\rho\mathbf W^{\tau+1}[a,s]+\boldsymbol\Lambda_w^\tau[a,s]\big)
	\end{bmatrix}.
	\nonumber
\end{align}
If the solution to \eqref{eq:t_normal_hyb} satisfies $\frac{N_t}{N_{\mathrm{RF}}}\|\mathbf t^s\|_2^2\le P^s$, then it is also optimal for
\eqref{opt:t_sub_hyb}. Otherwise, the power constraint is active. Let $\mu^s\ge0$ be the corresponding Lagrange multiplier and define
\[
\mathbf M_t^s(\mu^s)\triangleq
2\sum_{k=1}^{K}(\mathbf B_k^s)^H\mathbf B_k^s
+\Big(\eta_t+(\rho+2\mu^s)\frac{N_t}{N_{\mathrm{RF}}}\Big)\mathbf I_{N_{\mathrm{RF}}}
\in\mathbb C^{N_{\mathrm{RF}}\times N_{\mathrm{RF}}}.
\]
Then,
\begin{align}
	\mathbf t^s(\mu^s)
	&=
	\big(\mathbf M_t^s(\mu^s)\big)^{-1}
	\Bigg(
	2\sum_{k=1}^{K}(\mathbf B_k^s)^H\boldsymbol\omega_k^s
	\nonumber\\
	&\qquad+
	\begin{bmatrix}
		\sum_{a:m_a=1}\big(\rho\mathbf W^{\tau+1}[a,s]+\boldsymbol\Lambda_w^\tau[a,s]\big)\\
		\vdots\\
		\sum_{a:m_a=N_{\mathrm{RF}}}\big(\rho\mathbf W^{\tau+1}[a,s]+\boldsymbol\Lambda_w^\tau[a,s]\big)
	\end{bmatrix}
	\Bigg),
	\nonumber
\end{align}
where $\mu^s>0$ is obtained via bisection search such that $
\frac{N_t}{N_{\mathrm{RF}}}\|\mathbf t^s(\mu^s)\|_2^2=P^s.$

The updates for $\mathbf Q$, $\mathbf X$, and $\mathbf W$ are separable across antennas, whereas the update for $\mathbf T$ separates across
subcarriers.

\begin{proposition}
	\label{prop:recover_V_from_t_hyb}
	For each $s\in\mathcal S$, suppose $\mathbf t^s\in\mathbb C^{N_{\mathrm{RF}}}$ is given. A feasible collection
	$\{\mathbf V_k^s\}_{k=1}^K$ is any one satisfying
	\[
	\sum_{k=1}^K \mathbf V_k^s[m,:]\boldsymbol\omega_k^s=\mathbf t^s[m],
	\qquad \forall m=1,\ldots,N_{\mathrm{RF}}.
	\]
	Let $
	\bar{\boldsymbol\omega}^{\,s}
	\triangleq
	\big[(\boldsymbol\omega_1^s)^T~\cdots~(\boldsymbol\omega_K^s)^T\big]^T.$
	If $\|\bar{\boldsymbol\omega}^{\,s}\|_2^2>0$, then the minimum-Frobenius-norm feasible recovery is
\begin{align}
	\mathbf V_k^s[m,:]
=
\frac{\mathbf t^s[m]}{\sum_{j=1}^K\|\boldsymbol\omega_j^s\|_2^2}\,(\boldsymbol\omega_k^s)^H,
\qquad \forall m,\ k.\label{eq:V_update}
\end{align}
\end{proposition}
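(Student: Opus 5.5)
The plan is to observe that the recovery problem decouples across the rows $m$ of the digital precoders and then solve each row problem as a minimum-norm solution of a single linear equation. For fixed $s$ and $m$, stack the unknown rows into
\[
\mathbf v_m\triangleq\big[\mathbf V_1^s[m,:]~\cdots~\mathbf V_K^s[m,:]\big]\in\mathbb C^{1\times \sum_k n_k}.
\]
The feasibility condition then reads $\mathbf v_m\bar{\boldsymbol\omega}^{\,s}=\mathbf t^s[m]$. The total Frobenius norm $\sum_k\|\mathbf V_k^s\|_F^2$ equals $\sum_m\|\mathbf v_m\|_2^2$, and each constraint involves only its own $\mathbf v_m$. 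Hence minimizing the total norm subject to all $N_{\mathrm{RF}}$ constraints splits into $N_{\mathrm{RF}}$ independent problems
\[
\min_{\mathbf v_m}\|\mathbf v_m\|_2^2\quad\text{s.t.}\quad \mathbf v_m\bar{\boldsymbol\omega}^{\,s}=\mathbf t^s[m].
\]
This also settles the feasibility claim, since the condition is exactly $\mathbf t^s=\sum_k\mathbf V_k^s\boldsymbol\omega_k^s$ read componentwise.

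Each row problem is the minimum-norm solution of one nonzero linear equation. I will argue it by orthogonal decomposition rather than Lagrange multipliers, to keep it clean over $\mathbb C$. Write any feasible $\mathbf v_m$ as
\[
\mathbf v_m=\alpha(\bar{\boldsymbol\omega}^{\,s})^H+\mathbf z,\qquad \mathbf z\,\bar{\boldsymbol\omega}^{\,s}=0.
\]
Substituting into the constraint forces $\alpha=\mathbf t^s[m]/\|\bar{\boldsymbol\omega}^{\,s}\|_2^2$, which is well defined because $\|\bar{\boldsymbol\omega}^{\,s}\|_2^2>0$. The cross term vanishes, so
\[
\|\mathbf v_m\|_2^2=|\alpha|^2\|\bar{\boldsymbol\omega}^{\,s}\|_2^2+\|\mathbf z\|_2^2 .
\]
This is minimized uniquely at $\mathbf z=\mathbf 0$.

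Splitting $\alpha(\bar{\boldsymbol\omega}^{\,s})^H$ back into its $K$ blocks, and using $\|\bar{\boldsymbol\omega}^{\,s}\|_2^2=\sum_j\|\boldsymbol\omega_j^s\|_2^2$, gives exactly \eqref{eq:V_update}. Finally, I will verify feasibility directly: $\sum_k\mathbf V_k^s[m,:]\boldsymbol\omega_k^s=\mathbf t^s[m]\sum_k\|\boldsymbol\omega_k^s\|_2^2/\sum_j\|\boldsymbol\omega_j^s\|_2^2=\mathbf t^s[m]$.

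There is no real obstacle. The only point needing care is the decoupling step: the Frobenius objective summed over $k$ has to regroup row-wise across users, and the row-vector (conjugate) convention has to be handled so that the minimizer comes out as $(\boldsymbol\omega_k^s)^H$ and not $(\boldsymbol\omega_k^s)^T$.
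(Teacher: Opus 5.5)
Your proposal is correct and follows the paper's proof: the paper likewise concatenates the rows $[\mathbf V_1^s[m,:]\ \cdots\ \mathbf V_K^s[m,:]]$, observes that the total Frobenius norm and the feasibility constraints separate row by row, and takes the minimum-norm solution of $\bar{\mathbf v}^{\,s}[m,:]\bar{\boldsymbol\omega}^{\,s}=\mathbf t^s[m]$. The only difference is that the paper simply invokes the pseudoinverse formula, whereas your orthogonal decomposition proves that step explicitly and also shows the minimizer is unique.
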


\begin{proof}
	Fix $s\in\mathcal S$. For each RF-chain index $m$, define the concatenated row $
	\bar{\mathbf v}^{\,s}[m,:]
	\triangleq
	\big[\mathbf V_1^s[m,:]\ \cdots\ \mathbf V_K^s[m,:]\big].$
	Then, the feasibility condition becomes $\bar{\mathbf v}^{\,s}[m,:]\bar{\boldsymbol\omega}^{\,s}
	=
	\sum_{k=1}^K \mathbf V_k^s[m,:]\boldsymbol\omega_k^s
	=
	\mathbf t^s[m],
	 \forall m,$
	so the recovery problem separates row by row. Moreover, $\sum_{k=1}^K\|\mathbf V_k^s\|_F^2
	=
	\sum_{m=1}^{N_{\mathrm{RF}}}\|\bar{\mathbf v}^{\,s}[m,:]\|_2^2$. Hence, minimizing the total Frobenius norm is equivalent to minimizing, for each $m$, the Euclidean norm of
	$\bar{\mathbf v}^{\,s}[m,:]$ subject to
	$\bar{\mathbf v}^{\,s}[m,:]\bar{\boldsymbol\omega}^{\,s}=\mathbf t^s[m]$.
	
	If $\|\bar{\boldsymbol\omega}^{\,s}\|_2^2>0$, the minimum-norm solution of this linear equation is given by the pseudoinverse $	\bar{\mathbf v}^{\,s}[m,:]
	=
	\mathbf t^s[m]\frac{(\bar{\boldsymbol\omega}^{\,s})^H}{\|\bar{\boldsymbol\omega}^{\,s}\|_2^2}$.
	Extracting the block corresponding to user $k$ yields \eqref{eq:V_update}.
\end{proof}
Once the problem for each OFDM symbol is solved, we denote the solution by $\mathbf t^s(\boldsymbol\omega^{(b)})$.
\begin{proposition}
	\label{prop:admm_convergence_hyb}
	Assume that $\eta_w>0$, $\eta_t>0$, and that the $N_t$ antennas are partitioned into
	$N_{\mathrm{RF}}$ subarrays of identical size.
	Let $\mathcal R(\mathbf Z)\triangleq \begin{bmatrix}\Re\{\operatorname{vec}(\mathbf Z)\}\\ \Im\{\operatorname{vec}(\mathbf Z)\}\end{bmatrix}$, and define
	$\boldsymbol{\iota}_Q=\mathcal R(\mathbf Q)$, $\boldsymbol{\iota}_X=\mathcal R(\mathbf X)$, $\boldsymbol{\iota}_W=\mathcal R(\mathbf W)$, and $\boldsymbol{\iota}_T=\mathcal R(\mathbf T)$.
	Let $\mathbf A_{\mathcal R}$ and $\mathbf F_{\mathcal R}$ be the unique real matrices satisfying
	$\mathcal R(\mathfrak A(\mathbf W))=\mathbf A_{\mathcal R}\boldsymbol{\iota}_W$ and $\mathcal R(\mathfrak F(\mathbf W))=\mathbf F_{\mathcal R}\boldsymbol{\iota}_W$.
	Also, let $\mathfrak S:\mathbb C^{N_{\mathrm{RF}}\times S}\to\mathbb C^{N_t\times S}$ be defined by $[\mathfrak S(\mathbf T)]_{a,:}\triangleq \mathbf T[m_a,:]$,
	$a=1,\ldots,N_t$, and let $\mathbf P\in\{0,1\}^{N_t\times N_{\mathrm{RF}}}$ satisfy $\mathbf P[a,m]=1$ if $m=m_a$ and $\mathbf P[a,m]=0$ otherwise.
	Then, $\mathfrak S(\mathbf T)=\mathbf P\mathbf T$,
	$\operatorname{vec}(\mathfrak S(\mathbf T))=(\mathbf I_S\otimes \mathbf P)\operatorname{vec}(\mathbf T)$, and
	$\mathbf S_{\mathcal R}\triangleq \begin{bmatrix}\mathbf I_S\otimes \mathbf P & \mathbf 0\\ \mathbf 0 & \mathbf I_S\otimes \mathbf P\end{bmatrix}$
	satisfies $\mathcal R(\mathfrak S(\mathbf T))=\mathbf S_{\mathcal R}\boldsymbol{\iota}_T$. If $\rho$ satisfies \cite[Eq.~(3.37)]{tao2018convergence} for the
	real-valued problem
	\[
	\begin{aligned}
		\min \quad
		&\theta_1(\mathcal R^{-1}(\boldsymbol{\iota}_Q))+\theta_2(\mathcal R^{-1}(\boldsymbol{\iota}_X))
		+\theta_3(\mathcal R^{-1}(\boldsymbol{\iota}_W))\\&\quad+\theta_4(\mathcal R^{-1}(\boldsymbol{\iota}_T))\\
		\text{s.t.}\quad
		&
		\left[\begin{smallmatrix}
			\mathbf I_{2N_tG} & \mathbf 0 & -\mathbf A_{\mathcal R} & \mathbf 0\\
			\mathbf 0 & \mathbf I_{2N_t\ell S} & -\mathbf F_{\mathcal R} & \mathbf 0\\
			\mathbf 0 & \mathbf 0 & \mathbf I_{2N_tS} & -\mathbf S_{\mathcal R}
		\end{smallmatrix}\right]
		\left[\begin{smallmatrix}
			\boldsymbol{\iota}_Q\\
			\boldsymbol{\iota}_X\\
			\boldsymbol{\iota}_W\\
			\boldsymbol{\iota}_T
		\end{smallmatrix}\right]
		=\mathbf 0,
	\end{aligned}
	\]
	then, the ADMM iterates in \eqref{eq:admm_4block_updates_hyb} converge to a primal--dual solution of \eqref{opt:inner4block_hyb}. Consequently,
	the primal limit $(\mathbf Q^\star,\mathbf X^\star,\mathbf W^\star,\mathbf T^\star)$ is a global optimal solution of
	\eqref{opt:inner4block_hyb} and satisfies its KKT conditions.
	
	Moreover, if $\{(\eta_w^\nu,\eta_t^\nu)\}_\nu$ is any sequence with
	$(\eta_w^\nu,\eta_t^\nu)\to(0,0)$ and $(\eta_w^\nu,\eta_t^\nu)>\mathbf 0$,
	and $(\mathbf Q^\nu,\mathbf X^\nu,\mathbf W^\nu,\mathbf T^\nu)$ denotes a
	global optimal solution of \eqref{opt:inner4block_hyb} with parameters
	$(\eta_w^\nu,\eta_t^\nu)$, then every accumulation point of
	$\{(\mathbf Q^\nu,\mathbf X^\nu,\mathbf W^\nu,\mathbf T^\nu)\}_\nu$ is a
	global optimal solution of the unregularized inner problem.
\end{proposition}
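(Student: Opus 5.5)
The plan is to reduce the statement to the multi-block ADMM convergence theorem of \cite{tao2018convergence} applied to the real-valued reformulation, and then handle the vanishing-regularization claim by a standard epi-convergence/compactness argument.

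\emph{Real-valued lifting.} Writing every complex variable through $\mathcal R(\cdot)$ turns \eqref{opt:inner4block_hyb} into a real linearly constrained problem with the displayed block constraint matrix. The complex inner product $\langle\cdot,\cdot\rangle=\Re\{\tr(\cdot^H\cdot)\}$ coincides with the Euclidean inner product of the $\mathcal R$-lifted vectors, and $\|\cdot\|_F=\|\mathcal R(\cdot)\|_2$. Hence the augmented Lagrangian and the iterations \eqref{eq:admm_4block_updates_hyb} are exactly the real four-block ADMM for the lifted problem. The identities $\mathfrak S(\mathbf T)=\mathbf P\mathbf T$, $\operatorname{vec}(\mathbf P\mathbf T)=(\mathbf I_S\otimes\mathbf P)\operatorname{vec}(\mathbf T)$, and the block form of $\mathbf S_{\mathcal R}$ follow directly because $\mathbf P$ is real.

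\emph{Verifying the hypotheses of \cite{tao2018convergence}.} First, $\theta_1,\theta_2$ are indicators of nonempty closed convex sets ($\mathcal M$ and $\mathcal C$ are products of disks), so they are closed proper convex. Second, $\theta_3$ is strongly convex because $\eta_w>0$. Third, $\theta_4$ is a convex quadratic plus $\frac{\eta_t}{2}\|\mathbf T\|_F^2$ plus the indicator of $\mathcal P$, hence strongly convex with modulus $\eta_t$. Fourth, the constraint matrix has identity blocks in the $Q$, $X$, $W$ columns, which gives the full column-rank/structure needed. For the $T$ column, $\mathbf S_{\mathcal R}$ has full column rank since each RF chain has $N_t/N_{\mathrm{RF}}\ge1$ antennas, and $\mathbf P^T\mathbf P=\frac{N_t}{N_{\mathrm{RF}}}\mathbf I$; this is where the equal-subarray assumption enters. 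Fifth, a saddle point exists: $\mathbf W=\mathbf 0$, $\mathbf T=\mathbf 0$, $\mathbf Q=\mathbf 0$, $\mathbf X=\mathbf 0$ is feasible, the objective is strongly convex and hence coercive, and the constraints are linear with polyhedral-free Slater points (for $r_j,\chi,P^s>0$, the origin is strictly interior). So a primal minimizer and a KKT multiplier exist.

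Under the step-size condition \cite[Eq.~(3.37)]{tao2018convergence}, that theorem then gives convergence of the iterates to a primal--dual saddle point. Convexity then makes the primal limit globally optimal and KKT. The main obstacle is checking that the form of \cite{tao2018convergence} matches exactly: in particular the ordering of blocks and which blocks must be strongly convex versus merely convex. I would first try to match it as a problem with strongly convex $\theta_3,\theta_4$ and convex $\theta_1,\theta_2$, where the first two blocks have identity coefficient matrices. If the theorem requires strong convexity of the later blocks only, this ordering fits.

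\emph{Vanishing regularization.} The feasible set $\mathcal F$ of the unregularized problem is fixed and closed. It is also bounded: $\mathbf T\in\mathcal P$ is bounded, $\mathbf W=\mathbf P\mathbf T$, and $\mathbf Q,\mathbf X$ are linear images of $\mathbf W$. So the solutions $\mathbf Z^\nu$ lie in a compact set and accumulation points exist and are feasible. Let $f_0$ be the unregularized objective and $f_\nu=f_0+\frac{\eta_w^\nu}{2}\|\mathbf W\|_F^2+\frac{\eta_t^\nu}{2}\|\mathbf T\|_F^2$. For any unregularized optimum $\mathbf Z^\star$ we have $f_0(\mathbf Z^\nu)\le f_\nu(\mathbf Z^\nu)\le f_\nu(\mathbf Z^\star)\to f_0(\mathbf Z^\star)$, where the last step uses uniform boundedness of $\mathcal F$. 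Since $f_0$ is continuous on $\mathcal F$, passing to a subsequence converging to an accumulation point $\bar{\mathbf Z}$ gives $f_0(\bar{\mathbf Z})\le f_0(\mathbf Z^\star)$, so $\bar{\mathbf Z}$ is globally optimal.
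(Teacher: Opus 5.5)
Your proposal is correct and follows the paper's proof essentially step for step: you realify via $\mathcal R$ and check the hypotheses of \cite[Thm.~3.1]{tao2018convergence}, namely closed proper convex $\theta_1,\theta_2$, strongly convex $\theta_3,\theta_4$, full column rank of each block column (using $\mathbf S_{\mathcal R}^T\mathbf S_{\mathcal R}=\frac{N_t}{N_{\mathrm{RF}}}\mathbf I$), and the origin as a Slater point giving a nonempty KKT set; you then use compactness of the common feasible set for the vanishing-regularization claim. The only small difference is in that last step: your chain $f_0(\mathbf Z^\nu)\le f_\nu(\mathbf Z^\nu)\le f_\nu(\mathbf Z^\star)$ simply drops the nonnegative regularizer, which is marginally cleaner than the paper's route of showing $R^{\nu'}(\boldsymbol{\alpha}^{\nu'})\to 0$.
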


\begin{proof}
	The map $\mathcal R$ is linear and bijective, so \eqref{opt:inner4block_hyb} and its real-valued counterpart are equivalent. We verify Assumptions~2.1--2.2 of \cite{tao2018convergence} for the realified four-block problem. The functions $\theta_1=\delta_{\mathcal M}$ and $\theta_2=\delta_{\mathcal C}$ are closed proper convex, while $\theta_3$ and $\theta_4$ are strongly convex with moduli $\eta_w$ and $\eta_t$, respectively. Each block column of the constraint matrix has full column rank: the first two by construction, the third because $\mathbf A_{\mathcal R}^T\mathbf A_{\mathcal R}+\mathbf F_{\mathcal R}^T\mathbf F_{\mathcal R}+\mathbf I_{2N_tS}\succ\mathbf 0$, and the fourth because $\mathbf S_{\mathcal R}^T\mathbf S_{\mathcal R}=\frac{N_t}{N_{\mathrm{RF}}}\mathbf I_{2N_{\mathrm{RF}}S}\succ\mathbf 0$ since each subarray has $N_t/N_{\mathrm{RF}}$ antennas. Since $r_j>0$, $\chi>0$, and $P^s>0$ for all $j,s$, the origin is feasible and lies in the relative interior of the product of the four block domains, confirming Assumption~2.2. The feasible set is compact because boundedness of $\mathbf T\in\mathcal P$ propagates through the linear equalities to $\mathbf W$, $\mathbf Q$, and $\mathbf X$, while closedness follows from closedness of $\mathcal M$, $\mathcal C$, and $\mathcal P$ together with the linear equality constraints. Since the objective is proper and lower semicontinuous, it attains its minimum on the compact feasible set, so the solution set is nonempty. Hence, if $\rho$ satisfies \cite[Eq.~(3.37)]{tao2018convergence}, then \cite[Thm.~3.1]{tao2018convergence} yields convergence of the real ADMM iterates to a primal--dual solution of the realified problem. Mapping back through $\mathcal R^{-1}$ gives convergence of \eqref{eq:admm_4block_updates_hyb} to a primal--dual solution of \eqref{opt:inner4block_hyb}. Since \eqref{opt:inner4block_hyb} is convex, the primal limit is globally optimal and satisfies the KKT conditions.
	
For the regularization-vanishing claim, let $\boldsymbol{\alpha}\triangleq(\mathbf Q,\mathbf X,\mathbf W,\mathbf T)$, let $\mathcal F$ denote the common feasible set of the regularized and unregularized inner problems, and let $F_0$ be the unregularized objective. Since $\mathcal F$ is compact, the sequence of regularized minimizers $\{\boldsymbol{\alpha}^\nu\}$ admits accumulation points; let $\bar{\boldsymbol{\alpha}}$ be one such point, attained along a subsequence $\nu'$. Because $\mathcal F$ is closed, $\bar{\boldsymbol{\alpha}}\in\mathcal F$. Regularized optimality gives $F_0(\boldsymbol{\alpha}^{\nu'})+R^{\nu'}(\boldsymbol{\alpha}^{\nu'})\le F_0(\boldsymbol{\alpha})+R^{\nu'}(\boldsymbol{\alpha})$ for all $\boldsymbol{\alpha}\in\mathcal F$, where $R^{\nu'}(\boldsymbol{\alpha})\triangleq \frac{\eta_w^{\nu'}}{2}\|\mathbf W\|_F^2+\frac{\eta_t^{\nu'}}{2}\|\mathbf T\|_F^2$. Since $R^{\nu'}(\boldsymbol{\alpha}^{\nu'})\to0$ because $\{\boldsymbol{\alpha}^{\nu'}\}$ is bounded, and $R^{\nu'}(\boldsymbol{\alpha})\to0$ for each fixed $\boldsymbol{\alpha}\in\mathcal F$, taking $\limsup_{\nu'\to\infty}$ yields $\limsup_{\nu'\to\infty}F_0(\boldsymbol{\alpha}^{\nu'})\le F_0(\boldsymbol{\alpha})$ for all $\boldsymbol{\alpha}\in\mathcal F$. By lower semicontinuity of $F_0$, we obtain $F_0(\bar{\boldsymbol{\alpha}})\le \liminf_{\nu'\to\infty}F_0(\boldsymbol{\alpha}^{\nu'})\le \limsup_{\nu'\to\infty}F_0(\boldsymbol{\alpha}^{\nu'})\le F_0(\boldsymbol{\alpha})$ for all $\boldsymbol{\alpha}\in\mathcal F$. Therefore, $\bar{\boldsymbol{\alpha}}$ is a global optimal solution of the unregularized inner problem.
\end{proof}

\begin{algorithm}[t!]
	\caption{The Proposed ADMM Algorithm}
	\label{al:admm}
	
	Initialize $\tau\gets 0$, feasible $\mathbf T^0$, $\mathbf W^0$, $\mathbf X^0$, $\mathbf Q^0$, and dual matrices
	$\boldsymbol\Lambda_q^0$, $\boldsymbol\Lambda_x^0$, $\boldsymbol\Lambda_w^0$\;
	
	\While{stopping criterion is not satisfied}{
		Update $\mathbf Q^{\tau+1}$ via \eqref{eq:q_closed_hyb} in parallel $\forall a$\;
		
		Update $\mathbf X^{\tau+1}$ via \eqref{eq:x_closed_hyb} in parallel $\forall a$\;
		
		Update $\mathbf W^{\tau+1}$ via \eqref{eq:w_normal_per_antenna_hyb} in parallel $\forall a$\;
		
		Update $\mathbf T^{\tau+1}$ by solving \eqref{opt:t_sub_hyb} in parallel $s\in\mathcal S$\;
		
		Update $\boldsymbol\Lambda_q^{\tau+1}$, $\boldsymbol\Lambda_x^{\tau+1}$, and $\boldsymbol\Lambda_w^{\tau+1}$
		according to \eqref{eq:admm_4block_updates_hyb}$\,$\;
		
		$\tau\gets\tau+1$\;
	}
	
	Recover $\{\mathbf V_k^{s,\tau}\}_{k=1,\ldots,K;\,s\in\mathcal S}$ from $\{\mathbf t^{s,\tau}\}_{s\in\mathcal S}$
	using Proposition~\ref{prop:recover_V_from_t_hyb} in parallel over $s\in\mathcal S$\;
	
	\textbf{Output:} $\{\mathbf V_k^{s,\tau}\}_{k=1,\ldots,K;\,s\in\mathcal S}$ and $\{\mathbf t^{s,\tau}\}_{s\in\mathcal S}$\;
	
\end{algorithm}
	
\subsection{The Subproblem with Respect to $\mathbf{V}_{\text{RF}}$}\label{sec:analogp}

Like the digital precoder, $\mathbf{V}_{\text{RF}}$ is optimized independently for each OFDM symbol realization. We note that each row of $\mathbf{V}_{\text{RF}}$ contains exactly one non-zero entry (the PS coefficient of that antenna),
and this entry has unit magnitude. Let $\bar{\mathbf{v}}_{\text{PS}}\triangleq \text{vec}(\mathbf{V}_{\text{RF}})$ and
let \texttt{idx} denote the indices of the non-zero entries of $\bar{\mathbf{v}}_{\text{PS}}$ (one index per row of $\mathbf{V}_{\text{RF}}$).
We then define
\begin{align}
	\mathbf{v}_{\text{PS}}=\bar{\mathbf{v}}_{\text{PS}}[\texttt{idx}]\in \mathbb{C}^{N_t},
	\qquad |\mathbf{v}_{\text{PS}}[a]|=1,\ \forall a.
	\label{eq:vps_from_idx}
\end{align}

\medskip
\noindent We consider the MSE matrix expression in \eqref{eq:E_det_t_expand} and ignore all terms that are independent of $\mathbf{V}_{\text{RF}}$. We obtain the
$\mathbf{V}_{\text{RF}}$-dependent part of the MSE matrix:
\begin{align}
	\mathbf{E}_{k}^{s,\text{simplified}}
	&=
	(\mathbf U_k^{s})^{H}\mathbf{U}_{\text{RF},k}^{H}
	\mathbf{H}_k^s\mathbf{V}_{\text{RF}}\mathbf t^s(\mathbf t^s)^H
	\mathbf{V}_{\text{RF}}^{H}(\mathbf{H}_k^s)^{H}
	\mathbf{U}_{\text{RF},k}\mathbf{U}_k^s \nonumber\\
	&\quad
	-(\mathbf U_k^{s})^{H}\mathbf{U}_{\text{RF},k}^{H}
	\mathbf{H}_k^s\mathbf{V}_{\text{RF}}\mathbf t^s(\boldsymbol{\omega}_k^{s})^{H}
	\nonumber\\
	&-\Big(
	(\mathbf U_k^{s})^{H}\mathbf{U}_{\text{RF},k}^{H}
	\mathbf{H}_k^s\mathbf{V}_{\text{RF}}\mathbf t^s(\boldsymbol{\omega}_k^{s})^{H}
	\Big)^{H}.
	\label{eq:Ek_simplified}
\end{align}
To optimize $\mathbf{V}_{\text{RF}}$, we minimize $\sum_{s\in \mathcal{S}} \sum_{k=1}^{K}\tr(\mathbf{E}_{k}^{s,\text{simplified}}).$ 
From \eqref{eq:Ek_simplified}, the linear dependence on $\mathbf{V}_{\text{RF}}$ appears in the two cross terms, which contribute
\begin{align}
	-2\Re\!\Bigg(
	\sum_{s\in \mathcal{S}} \sum_{k=1}^{K}
	\tr\!\Big(
	(\mathbf U_k^{s})^{H}\mathbf{U}_{\text{RF},k}^{H}
	\mathbf{H}_k^s\mathbf{V}_{\text{RF}}\mathbf t^s(\boldsymbol{\omega}_k^{s})^{H}
	\Big)
	\Bigg),
	\nonumber
\end{align}
to the objective function. Using the cyclicity of the trace,
\begin{align}
	&	\tr\!\Big(
	(\mathbf U_k^{s})^{H}\mathbf{U}_{\text{RF},k}^{H}
	\mathbf{H}_k^s\mathbf{V}_{\text{RF}}\mathbf t^s(\boldsymbol{\omega}_k^{s})^{H}
	\Big)
	\nonumber\\
	&	=
	\tr\!\Big(
	\mathbf{V}_{\text{RF}}\mathbf t^s(\boldsymbol{\omega}_k^{s})^{H}
	(\mathbf U_k^{s})^{H}\mathbf{U}_{\text{RF},k}^{H}\mathbf{H}_k^s
	\Big).
	\label{eq:linear_cyclic}
\end{align}
Define $\mathbf{B}
	\triangleq
	\sum_{s\in \mathcal{S}} \sum_{k=1}^{K}
	\Big(\mathbf t^s(\boldsymbol{\omega}_k^{s})^{H}
	(\mathbf U_k^{s})^{H}\mathbf{U}_{\text{RF},k}^{H}\mathbf{H}_k^s\Big)^{H}.$
Summing \eqref{eq:linear_cyclic} over $(k,s)$ gives
\begin{align}
	\sum_{s\in \mathcal{S}} \sum_{k=1}^{K}
	\tr\!\Big(
	(\mathbf U_k^{s})^{H}\mathbf{U}_{\text{RF},k}^{H}
	\mathbf{H}_k^s\mathbf{V}_{\text{RF}}\mathbf t^s(\boldsymbol{\omega}_k^{s})^{H}
	\Big)
	=
	\tr\!\big(\mathbf{B}^H\mathbf{V}_{\text{RF}}\big).
	\nonumber
\end{align}
Using $\tr(\mathbf{B}^H\mathbf{V}_{\text{RF}})=\text{vec}(\mathbf{B})^H\text{vec}(\mathbf{V}_{\text{RF}})$, let
$\bar{\mathbf{u}}_{\text{PS}}\triangleq \text{vec}(\mathbf{B})$ and $\bar{\mathbf{v}}_{\text{PS}}\triangleq \text{vec}(\mathbf{V}_{\text{RF}})$.
Restricting $\bar{\mathbf{v}}_{\text{PS}}$ to its non-zero entries according to  \eqref{eq:vps_from_idx} and similarly defining $\mathbf{u}_{\text{PS}}=\bar{\mathbf{u}}_{\text{PS}}[\texttt{idx}]$, yields
\begin{align}
	\tr(\mathbf{B}^H\mathbf{V}_{\text{RF}})
	=\mathbf{u}_{\text{PS}}^{H}\mathbf{v}_{\text{PS}}.
	\label{eq:linear_uv}
\end{align}
Hence, the linear contribution to the objective is $-2\Re\!\big(\mathbf{u}_{\text{PS}}^{H}\mathbf{v}_{\text{PS}}\big)$.

Consider the quadratic term in \eqref{eq:Ek_simplified}. By the cyclicity of the trace, we have
\begin{align}
	&\tr\!\Big(
	(\mathbf U_k^{s})^{H}\mathbf{U}_{\text{RF},k}^{H}
	\mathbf{H}_k^s\mathbf{V}_{\text{RF}}\mathbf t^s(\mathbf t^s)^H
	\mathbf{V}_{\text{RF}}^{H}(\mathbf{H}_k^s)^{H}
	\mathbf{U}_{\text{RF},k}\mathbf{U}_k^s
	\Big)\nonumber\\
	&=
	\tr\!\Big(
	\mathbf{V}_{\text{RF}}\mathbf t^s(\mathbf t^s)^H\mathbf{V}_{\text{RF}}^{H}\,
	(\mathbf{H}_k^s)^{H}\mathbf{U}_{\text{RF},k}\mathbf{U}_k^s
	(\mathbf U_k^{s})^{H}\mathbf{U}_{\text{RF},k}^{H}\mathbf{H}_k^s
	\Big).
	\label{eq:quad_cyclic}
\end{align}
We use the standard identity
\begin{align}
	\tr\!\big(\mathbf{V}\mathbf{B}\mathbf{V}^H\mathbf A\big)
	=
	\text{vec}(\mathbf{V})^H\big(\mathbf{B}^T\otimes \mathbf A\big)\text{vec}(\mathbf{V}),
	\label{eq:tr_vec_kron}
\end{align}
which holds for general $\mathbf A$, $\mathbf{B}$, and $\mathbf{V}$. Applying \eqref{eq:tr_vec_kron} to \eqref{eq:quad_cyclic} with
$\mathbf{V}=\mathbf{V}_{\text{RF}}$,
$\mathbf{B}=\mathbf t^s(\mathbf t^s)^H$, and
$\mathbf A=(\mathbf{H}_k^s)^{H}\mathbf{U}_{\text{RF},k}\mathbf{U}_k^s
(\mathbf U_k^{s})^{H}\mathbf{U}_{\text{RF},k}^{H}\mathbf{H}_k^s$, we obtain the following.
\begin{align}
	&\tr\!\Big(
	(\mathbf U_k^{s})^{H}\mathbf{U}_{\text{RF},k}^{H}
	\mathbf{H}_k^s\mathbf{V}_{\text{RF}}\mathbf t^s(\mathbf t^s)^H
	\mathbf{V}_{\text{RF}}^{H}(\mathbf{H}_k^s)^{H}
	\mathbf{U}_{\text{RF},k}\mathbf{U}_k^s
	\Big)	\label{eq:quad_vec_form}\\
	&=
	\bar{\mathbf{v}}_{\text{PS}}^{H}
	\Big(
	\big(\mathbf t^s(\mathbf t^s)^H\big)^T
	\otimes
	(\mathbf{H}_k^s)^{H}\mathbf{U}_{\text{RF},k}\mathbf{U}_k^s
	(\mathbf U_k^{s})^{H}\mathbf{U}_{\text{RF},k}^{H}\mathbf{H}_k^s
	\Big)
	\bar{\mathbf{v}}_{\text{PS}}.\nonumber
\end{align}
Summing \eqref{eq:quad_vec_form} over $s\in\mathcal{S}$ and $k=1,\ldots,K$, and then restricting to the feasible (non-zero)
entries via \texttt{idx}, we obtain a quadratic form in $\mathbf{v}_{\text{PS}}$:
\begin{align}
	&\sum_{s\in\mathcal{S}}\sum_{k=1}^{K}
	\tr\!\Big(
	(\mathbf U_k^{s})^{H}\mathbf{U}_{\text{RF},k}^{H}
	\mathbf{H}_k^s\mathbf{V}_{\text{RF}}\mathbf t^s(\mathbf t^s)^H
	\mathbf{V}_{\text{RF}}^{H}(\mathbf{H}_k^s)^{H}
	\mathbf{U}_{\text{RF},k}\mathbf{U}_k^s
	\Big)\nonumber\\
	&=
	\bar{\mathbf v}_{\text{PS}}^{H}\,\mathbf M_{\text{PS}}\,\bar{\mathbf v}_{\text{PS}}
	=
	\mathbf v_{\text{PS}}^{H}\,\mathbf Q_{\text{PS}}\,\mathbf v_{\text{PS}},
	\label{eq:quad_vps}
\end{align}
where
\begin{align}
	&	\mathbf M_{\text{PS}}
	\triangleq
	\sum_{s\in\mathcal{S}}\sum_{k=1}^{K}
	\Big(
	\big(\mathbf t^s(\mathbf t^s)^H\big)^T\hspace{-.2cm}
	\otimes
	(\mathbf{H}_k^s)^{H}\mathbf{U}_{\text{RF},k}\mathbf{U}_k^s
	(\mathbf U_k^{s})^{H}\mathbf{U}_{\text{RF},k}^{H}\mathbf{H}_k^s
	\Big),\nonumber
	\\
	&	\mathbf Q_{\text{PS}}
	\triangleq
	\mathbf M_{\text{PS}}[\texttt{idx},\texttt{idx}].
	\nonumber
\end{align}
Since $\mathbf Q_{\mathrm{PS}}$ is Hermitian, the quadratic term is real-valued. Combining \eqref{eq:linear_uv} and \eqref{eq:quad_vps}, the $\mathbf{V}_{\text{RF}}$-dependent part of
the trace sum-MSE can be written as
\begin{align}
	f(\mathbf{v}_{\text{PS}})
	=
	\mathbf{v}_{\text{PS}}^{H}\mathbf{Q}_{\text{PS}}\mathbf{v}_{\text{PS}}
	-2\Re\!\Big(\mathbf{u}_{\text{PS}}^{H}\mathbf{v}_{\text{PS}}\Big).
	\label{eq:objps}
\end{align}
Fix $\mathbf{v}_{\text{PS}}[j]$ for $j\neq a$. Collecting only the terms that depend on $\mathbf{v}_{\text{PS}}[a]$ in \eqref{eq:objps} yields
(up to terms independent of $\mathbf{v}_{\text{PS}}[a]$):
\begin{align}
	&f(\mathbf{v}_{\text{PS}})
	=
	\text{const.}\label{eq:coord_collect}\\
	&
	+2\Re\!\left(
	\mathbf{v}_{\text{PS}}[a]^*
	\Big(
	\mathbf{Q}_{\text{PS}}[a,:]\mathbf{v}_{\text{PS}}
	-\mathbf{Q}_{\text{PS}}[a,a]\mathbf{v}_{\text{PS}}[a]
	-\mathbf{u}_{\text{PS}}[a]
	\Big)
	\right),
	\nonumber
\end{align}
Since $|\mathbf{v}_{\text{PS}}[a]|=1$, the term $\mathbf{Q}_{\text{PS}}[a,a]|\mathbf{v}_{\text{PS}}[a]|^2$ is constant in $\mathbf{v}_{\text{PS}}[a]$
and can be absorbed into the constant term. Using $|x+y|^2=|x|^2+|y|^2+2\Re(x^*y)$ with
\[
x=\mathbf{v}_{\text{PS}}[a],\qquad
y=\mathbf{Q}_{\text{PS}}[a,:]\mathbf{v}_{\text{PS}}
-\mathbf{Q}_{\text{PS}}[a,a]\mathbf{v}_{\text{PS}}[a]
-\mathbf{u}_{\text{PS}}[a].
\]
minimizing \eqref{eq:coord_collect} over $|\mathbf{v}_{\text{PS}}[a]|=1$ is equivalent to minimizing $|\mathbf{v}_{\text{PS}}[a]+y|^2$
over the unit circle. Therefore, the minimizer is the projection of $-y$ onto $|\mathbf{v}_{\text{PS}}[a]|=1$, yielding
\[
\mathbf{v}_{\text{PS}}[a]
=
-\frac{
	\mathbf{Q}_{\text{PS}}[a,:]\mathbf{v}_{\text{PS}}
	-\mathbf{Q}_{\text{PS}}[a,a]\mathbf{v}_{\text{PS}}[a]
	-\mathbf{u}_{\text{PS}}[a]
}{
	\left|
	\mathbf{Q}_{\text{PS}}[a,:]\mathbf{v}_{\text{PS}}
	-\mathbf{Q}_{\text{PS}}[a,a]\mathbf{v}_{\text{PS}}[a]
	-\mathbf{u}_{\text{PS}}[a]
	\right|
}.
\]
We use coordinate descent, iteratively updating each element $\mathbf{v}_{\text{PS}}[a]$ while fixing others, using current iteration values
for elements $1,\ldots,a-1$ and previous iteration values for $a+1,\ldots,N_t$. Each update maintains unit magnitude and yields a non-increasing
sequence of objective values $f(\mathbf{v}_{\text{PS}})$; under continuity of the objective and exact minimization of each coordinate subproblem on the unit-modulus set, every accumulation point of the iterates is
a stationary point \cite{bertsekas1999nonlinear}.

PS impairments introduce independent random phase errors at each antenna element, modeled as complex exponentials with Gaussian-distributed
phase deviations. Unlike common local oscillator phase noise affecting all antennas equally, PS errors are element-specific and occur when new precoding
matrices are applied, distorting the intended precoding phase pattern through uncorrelated random phase deviations \cite{9767793,9069262}.
Starting from \eqref{eq:objps}, we consider the expected objective under phase errors:
\begin{align}
	\mathbb{E}_{\mathbf e_{\text{error}}}\!\left[f(\mathbf v_{\text{PS}})\right]
	&=
	\mathbb{E}_{\mathbf e_{\text{error}}}\!\left[\mathbf v_{\text{PS}}^{H}\mathbf Q_{\text{PS}}\mathbf v_{\text{PS}}\right]
	-2\,\mathbb{E}_{\mathbf e_{\text{error}}}\!\left[\Re\!\big(\mathbf u_{\text{PS}}^{H}\mathbf v_{\text{PS}}\big)\right].
	\label{eq:exp_obj_start}
\end{align}
Let $\mathbf v_{\text{PS}}=\tilde{\mathbf v}_{\text{PS}}\odot \mathbf e_{\text{error}}$, where $\tilde{\mathbf v}_{\text{PS}}$ denotes the
intended phase-shift vector and
\begin{align}
	\mathbf e_{\text{error}}=
	\big[e^{\jmath\Delta\theta_1},e^{\jmath\Delta\theta_2},\ldots,e^{\jmath\Delta\theta_{N_t}}\big]^T,
	\qquad
	\Delta\theta_i\sim\mathcal N(0,\sigma_e^2),
	\nonumber
\end{align}
with independent $\{\Delta\theta_i\}$. Expanding $\mathbf v_{\text{PS}}^{H}\mathbf Q_{\text{PS}}\mathbf v_{\text{PS}}$ element-wise and substituting
$\mathbf v_{\text{PS}}=\tilde{\mathbf v}_{\text{PS}}\odot \mathbf e_{\text{error}}$ gives
\begin{small}
	\begin{align}
		\mathbb{E}_{\mathbf e_{\text{error}}}\!\left[\mathbf v_{\text{PS}}^{H}\mathbf Q_{\text{PS}}\mathbf v_{\text{PS}}\right]
		&=
		\sum_{i=1}^{N_t}\sum_{j=1}^{N_t} \mathbf Q_{\text{PS}}[i,j]\,
		\tilde{\mathbf v}_{\text{PS}}^{*}[i]\tilde{\mathbf v}_{\text{PS}}[j]\,
		\mathbb{E}\!\left[e^{\jmath(\Delta\theta_j-\Delta\theta_i)}\right].
		\label{eq:exp_quad_start}
	\end{align}
\end{small}
For $i=j$, $\mathbb{E}[e^{\jmath(\Delta\theta_i-\Delta\theta_i)}]=1$. For $i\neq j$, independence implies
\begin{align}
	\mathbb{E}\!\left[e^{\jmath(\Delta\theta_j-\Delta\theta_i)}\right]
	&=
	\mathbb{E}\!\left[e^{\jmath\Delta\theta_j}\right]\,
	\mathbb{E}\!\left[e^{-\jmath\Delta\theta_i}\right]\nonumber\\
	&
	=
	e^{-\sigma_e^2/2}\,e^{-\sigma_e^2/2}
	=
	e^{-\sigma_e^2},
	\qquad i\neq j,
	\label{eq:exp_diff_gauss}
\end{align}
where we used $\mathbb{E}[e^{\jmath\Delta\theta_i}]=e^{-\sigma_e^2/2}$ for $\Delta\theta_i\sim\mathcal N(0,\sigma_e^2)$.
Substituting \eqref{eq:exp_diff_gauss} into \eqref{eq:exp_quad_start} and separating diagonal and off-diagonal terms yields
\begin{align}
	&\mathbb{E}_{\mathbf e_{\text{error}}}\!\left[\mathbf v_{\text{PS}}^{H}\mathbf Q_{\text{PS}}\mathbf v_{\text{PS}}\right]
	=
	\sum_{i=1}^{N_t}\mathbf Q_{\text{PS}}[i,i]\;|\tilde{\mathbf v}_{\text{PS}}[i]|^2
		\nonumber\\
	&+e^{-\sigma_e^2}\!\!\sum_{i=1}^{N_t}\sum_{\substack{j=1\\j\neq i}}^{N_t}
	\mathbf Q_{\text{PS}}[i,j]\;\tilde{\mathbf v}_{\text{PS}}^{*}[i]\tilde{\mathbf v}_{\text{PS}}[j]=
	\tilde{\mathbf v}_{\text{PS}}^{H}\big(\mathbf Q_{\text{PS}}\odot \mathbf I_{N_t}\big)\tilde{\mathbf v}_{\text{PS}}   \nonumber\\
	&
	+e^{-\sigma_e^2}\tilde{\mathbf v}_{\text{PS}}^{H}\big(\mathbf Q_{\text{PS}}\odot (\mathbf 1_{N_t\times N_t}-\mathbf I_{N_t})\big)\tilde{\mathbf v}_{\text{PS}}.\label{eq:exp_quad}
\end{align}
Using the linearity of expectation and the fact that $\Re(\cdot)$ is a real-linear operator, we write
\begin{align}
	\mathbb{E}_{\mathbf e_{\text{error}}}\!\left[\Re\!\big(\mathbf u_{\text{PS}}^{H}\mathbf v_{\text{PS}}\big)\right]
	=
	\Re\!\left(\mathbf u_{\text{PS}}^{H}\,\mathbb{E}_{\mathbf e_{\text{error}}}[\mathbf v_{\text{PS}}]\right).
	\label{eq:exp_lin_start}
\end{align}
Since $\mathbf v_{\text{PS}}=\tilde{\mathbf v}_{\text{PS}}\odot \mathbf e_{\text{error}}$ and the entries of $\mathbf e_{\text{error}}$
are i.i.d., we have $\mathbb{E}[\mathbf e_{\text{error}}]=e^{-\sigma_e^2/2}\mathbf 1_{N_t\times 1}$ and thus
$\mathbb{E}[\mathbf v_{\text{PS}}]=e^{-\sigma_e^2/2}\tilde{\mathbf v}_{\text{PS}}$. Substituting into \eqref{eq:exp_lin_start} yields
\begin{align}
	\mathbb{E}_{\mathbf e_{\text{error}}}\!\left[\Re\!\big(\mathbf u_{\text{PS}}^{H}\mathbf v_{\text{PS}}\big)\right]
	=
	e^{-\sigma_e^2/2}\Re\!\big(\mathbf u_{\text{PS}}^{H}\tilde{\mathbf v}_{\text{PS}}\big).
	\label{eq:exp_lin}
\end{align}
Substituting \eqref{eq:exp_quad} and \eqref{eq:exp_lin} into \eqref{eq:exp_obj_start} yields
\begin{align}
	\mathbb{E}_{\mathbf e_{\text{error}}}\!\left[f(\mathbf v_{\text{PS}})\right]
	=
	\tilde{\mathbf v}_{\text{PS}}^{H}\tilde{\mathbf Q}_{\text{PS}}\tilde{\mathbf v}_{\text{PS}}
	-2e^{-\sigma_e^2/2}\Re\!\big(\mathbf u_{\text{PS}}^{H}\tilde{\mathbf v}_{\text{PS}}\big),
	\label{eq:exp_obj_final}
\end{align}
where $
	\tilde{\mathbf Q}_{\text{PS}}
	\triangleq
	\mathbf Q_{\text{PS}}\odot \mathbf I_{N_t}
	+e^{-\sigma_e^2}\big(\mathbf Q_{\text{PS}}\odot (\mathbf 1_{N_t\times N_t}-\mathbf I_{N_t})\big).$
Applying the same coordinate-descent argument used for \eqref{eq:objps} to \eqref{eq:exp_obj_final} gives, for $a=1,\ldots,N_t$,
\[
\tilde{\mathbf v}_{\text{PS}}[a]
=
-\frac{
	\tilde{\mathbf Q}_{\text{PS}}[a,:]\tilde{\mathbf v}_{\text{PS}}
	-\tilde{\mathbf Q}_{\text{PS}}[a,a]\tilde{\mathbf v}_{\text{PS}}[a]
	-e^{-\sigma_e^2/2}\mathbf u_{\text{PS}}[a]
}{
	\left|
	\tilde{\mathbf Q}_{\text{PS}}[a,:]\tilde{\mathbf v}_{\text{PS}}
	-\tilde{\mathbf Q}_{\text{PS}}[a,a]\tilde{\mathbf v}_{\text{PS}}[a]
	-e^{-\sigma_e^2/2}\mathbf u_{\text{PS}}[a]
	\right|
}.
\]
The same derivation can be repeated for other phase-error distributions by replacing $\mathbb{E}[e^{\jmath\Delta\theta}]$
in \eqref{eq:exp_diff_gauss} accordingly.

Once the problem for each OFDM symbol is solved, we denote the solution by $\mathbf V_{\mathrm{RF}}^{(b)}$.

\subsection{The Subproblem with Respect to $\mathbf U_k^s$}
For fixed transmit-side variables and fixed analog combiner $\mathbf U_{\mathrm{RF},k}$, the $\mathbf U_k^s$-subproblem is unconstrained and quadratic in $\mathbf U_k^s$. Because $\mathbf V_{\mathrm{RF}}$ is symbol-dependent (i.e., realization-specific), we define the per-realization effective channel $\bar{\mathbf H}_k^{s(b)}\triangleq \mathbf U_{\mathrm{RF},k}^H\mathbf H_k^s\mathbf V_{\mathrm{RF}}^{(b)}$. Setting $\partial J/\partial \mathbf U_k^{s*}=\mathbf 0$ yields the linear MMSE combiner
\begin{eqnarray}
	\mathbf U_k^s=&\hspace{-.3cm}\Big(\frac{1}{B}\sum_{b=1}^{B}\bar{\mathbf H}_k^{s(b)}\mathbf R_{t^st^s}^{(b)}\bar{\mathbf H}_k^{s(b)H}+\sigma_{\mathrm{noise},k}^{s\,2}\mathbf U_{\mathrm{RF},k}^H\mathbf U_{\mathrm{RF},k}\Big)^{-1}\nonumber\\
	&\times\frac{1}{B}\sum_{b=1}^{B}\bar{\mathbf H}_k^{s(b)}\mathbf R_{t^s\omega_k^s}^{(b)},\label{eq:U}
\end{eqnarray}
where $\mathbf R_{t^st^s}^{(b)}\triangleq \mathbf t^s(\boldsymbol{\omega}^{(b)})\mathbf t^s(\boldsymbol{\omega}^{(b)})^H$ and $\mathbf R_{t^s\omega_k^s}^{(b)}\triangleq \mathbf t^s(\boldsymbol{\omega}^{(b)})(\boldsymbol{\omega}_k^{s(b)})^H$, with each $\mathbf t^s (\boldsymbol{\omega}^{(b)})$ produced by the inner ADMM of Section~\ref{subsec:admm_blocks_revised_hyb}. Since $\mathbf V_{\mathrm{RF}}^{(b)}$ is folded into $\bar{\mathbf H}_k^{s(b)}$ inside the sum, it cannot be pulled out of the sample averages. The symbol decision is $\mathcal Q(\mathbf U_k^{sH}\mathbf U_{\mathrm{RF},k}^H\mathbf y_k^s)$, with $\mathcal Q$ the componentwise QAM projection.

\subsection{The Subproblem with Respect to Analog Combining $\mathbf{U}_{\text{RF},k}$}\label{sec:updateU}

Because the analog combiner is part of the symbol-agnostic receiver, we update it by minimizing the symbol-averaged cost of user $k$. In other words, we minimize $\sum_{s\in\mathcal S}\mathbb E_{\boldsymbol\omega}[\tr(\mathbf E_k^s)]$ with respect to
$\mathbf{U}_{\text{RF},k}$ while fixing all other variables. Starting from the expanded MSE expression in \eqref{eq:E_det_t_expand} and keeping only the terms that depend on
$\mathbf{U}_{\text{RF},k}$, the $\mathbf{U}_{\text{RF},k}$-dependent objective becomes
\begin{align}
	&\sum_{s\in\mathcal S}\mathbb E_{\boldsymbol\omega}\!\big[\tr(\mathbf E_k^s)\big]
	=
	-2\Re\!\Bigg(
	\sum_{s\in\mathcal S}
	\tr\!\Big(
	\mathbf{U}_{\text{RF},k}^{H}\mathbf{H}_k^s
	\mathbf R_{t^s\omega_k^s}(\mathbf U_k^{s})^{H}
	\Big)
	\Bigg)\nonumber\\
	&+
	\sum_{s\in\mathcal S}
	\tr\!\Big(
	(\mathbf U_k^{s})^{H}\mathbf{U}_{\text{RF},k}^{H}\bar{\mathbf D}_k^s\mathbf{U}_{\text{RF},k}\mathbf{U}_k^s
	\Big)\nonumber\\
	&
	+\text{(terms independent of $\mathbf{U}_{\text{RF},k}$)},
	\label{eq:simpcost}
\end{align}
where
$\bar{\mathbf D}_k^s
\triangleq
\mathbf{H}_k^s\mathbf R_{t^st^s}
(\mathbf{H}_k^s)^{H}
+\sigma_{\text{noise},k}^{s\,2}\mathbf{I}_{N_r},$
with the antenna-domain sample averages
$\mathbf R_{t^st^s}\triangleq \frac{1}{B}\sum_{b=1}^{B}\mathbf V_{\text{RF}}^{(b)}\mathbf t^s(\boldsymbol\omega^{(b)})\mathbf t^s(\boldsymbol\omega^{(b)})^H\mathbf V_{\text{RF}}^{(b)H}$ and $\mathbf R_{t^s\omega_k^s}\triangleq \frac{1}{B}\sum_{b=1}^{B}\mathbf V_{\text{RF}}^{(b)}\mathbf t^s(\boldsymbol\omega^{(b)})(\boldsymbol\omega_k^{s(b)})^H$. Since $\mathbf V_{\text{RF}}^{(b)}$ is realization-specific, it cannot be pulled out of the sample averages.

To update a generic feasible entry $(a,m)\in\mathcal E_k$ of $\mathbf U_{\text{RF},k}$ under the unit-modulus constraint $|\mathbf U_{\text{RF},k}[a,m]|=1$, we fix all other entries and isolate the dependence on this single variable.
Let $\mathbf{O}_{a,m}$ denote the matrix of the same size as $\mathbf{U}_{\text{RF},k}$ whose only non-zero entry equals $1$
at position $(a,m)$. Write
\begin{align}
	\mathbf{U}_{\text{RF},k}
	=
	\mathbf{U}_{\text{RF},k}^{(a,m)}+\mathbf{U}_{\text{RF},k}[a,m]\mathbf{O}_{a,m},
	\label{eq:U_decomp}
\end{align}
where $\mathbf{U}_{\text{RF},k}^{(a,m)}$ equals $\mathbf{U}_{\text{RF},k}$ with its $(a,m)^{\text{th}}$ entry set to zero. This way, $\mathbf{U}_{\text{RF},k}^{(a,m)}$ stays fixed while we optimize only $\mathbf{U}_{\text{RF},k}[a,m]$. Using the linearity and cyclicity of the trace, the linear part in \eqref{eq:simpcost} becomes
\begin{align}
	&\sum_{s\in\mathcal S}
	\tr\!\Big(
	\mathbf{U}_{\text{RF},k}^{H}\mathbf{H}_k^s
	\mathbf R_{t^s\omega_k^s}(\mathbf U_k^{s})^{H}
	\Big)\nonumber\\
	&=
	\sum_{s\in\mathcal S}
	\tr\!\Big(
	\big(\mathbf{U}_{\text{RF},k}^{(a,m)}\big)^{H}
	\mathbf{H}_k^s\mathbf R_{t^s\omega_k^s}(\mathbf U_k^{s})^{H}
	\Big)\nonumber\\
	&\quad+
	\mathbf{U}_{\text{RF},k}^*[a,m]\,
	\sum_{s\in\mathcal S}
	\tr\!\Big(
	\mathbf{O}_{a,m}^{H}
	\mathbf{H}_k^s\mathbf R_{t^s\omega_k^s}(\mathbf U_k^{s})^{H}
	\Big).\nonumber
\end{align}
Since $\tr(\mathbf{O}_{a,m}^H\mathbf{X})=\mathbf{X}[a,m]$ for any conformable matrix $\mathbf{X}$, define
$\beta_{1,a m}
\triangleq
\left[\sum_{s\in\mathcal S}
\Big(
\mathbf{H}_k^s\mathbf R_{t^s\omega_k^s}(\mathbf U_k^{s})^{H}
\Big)\right]_{a,m}.$
Then, the only part of the linear term depending on $\mathbf{U}_{\text{RF},k}[a,m]$ is
$\mathbf{U}_{\text{RF},k}^*[a,m]\beta_{1,a m}$, and the contribution of the two conjugate linear traces in \eqref{eq:simpcost}
is $-2\Re\!\big(\mathbf{U}_{\text{RF},k}^*[a,m]\beta_{1,a m}\big)$.

Using the cyclicity of the trace, the quadratic part in \eqref{eq:simpcost} can be rewritten as
\begin{align}
	\tr\!\Big(
	(\mathbf U_k^{s})^{H}\mathbf{U}_{\text{RF},k}^{H}\bar{\mathbf D}_k^s\mathbf{U}_{\text{RF},k}\mathbf{U}_k^s
	\Big)
	=
	\tr\!\Big(
	\mathbf{U}_{\text{RF},k}^{H}\bar{\mathbf D}_k^s\mathbf{U}_{\text{RF},k}\mathbf{U}_k^s(\mathbf U_k^{s})^{H}
	\Big).
	\label{eq:quad_cyclic_U}
\end{align}
Substituting \eqref{eq:U_decomp} into \eqref{eq:quad_cyclic_U} and expanding the product yields
\begin{align}
	&\tr\!\Big(
	\mathbf{U}_{\text{RF},k}^{H}\bar{\mathbf D}_k^s\mathbf{U}_{\text{RF},k}\mathbf{U}_k^s(\mathbf U_k^{s})^{H}
	\Big)\nonumber\\
	&=
	\tr\!\Big(
	\big(\mathbf{U}_{\text{RF},k}^{(a,m)}\big)^{H}\bar{\mathbf D}_k^s\mathbf{U}_{\text{RF},k}^{(a,m)}\mathbf{U}_k^s(\mathbf U_k^{s})^{H}
	\Big)\nonumber\\
	&\quad+
	\mathbf{U}_{\text{RF},k}^*[a,m]\,
	\tr\!\Big(
	\mathbf{O}_{a,m}^{H}\bar{\mathbf D}_k^s\mathbf{U}_{\text{RF},k}^{(a,m)}\mathbf{U}_k^s(\mathbf U_k^{s})^{H}
	\Big)\nonumber\\
	&\quad+
	\mathbf{U}_{\text{RF},k}[a,m]\,
	\tr\!\Big(
	\big(\mathbf{U}_{\text{RF},k}^{(a,m)}\big)^{H}\bar{\mathbf D}_k^s\mathbf{O}_{a,m}\mathbf{U}_k^s(\mathbf U_k^{s})^{H}
	\Big)\nonumber\\
	&\quad+
	|\mathbf{U}_{\text{RF},k}[a,m]|^2\,
	\tr\!\Big(
	\mathbf{O}_{a,m}^{H}\bar{\mathbf D}_k^s\mathbf{O}_{a,m}\mathbf{U}_k^s(\mathbf U_k^{s})^{H}
	\Big).
	\label{eq:quad_split}
\end{align}
Since each $\bar{\mathbf D}_k^s$ is Hermitian, the two cross terms in \eqref{eq:quad_split} are complex conjugates. The last term in \eqref{eq:quad_split} is constant with respect to $\mathbf{U}_{\text{RF},k}[a,m]$ because $|\mathbf{U}_{\text{RF},k}[a,m]|=1$.
Using again $\tr(\mathbf{O}_{a,m}^H\mathbf{Y})=\mathbf{Y}[a,m]$, define
$\beta_{2,a m}
\triangleq
\Big[\sum_{s\in\mathcal S}
\bar{\mathbf D}_k^s\mathbf{U}_{\text{RF},k}^{(a,m)}\mathbf{U}_k^s(\mathbf U_k^{s})^{H}
\Big]_{a,m}.$
Then, the $\mathbf{U}_{\text{RF},k}[a,m]$-dependent part of the quadratic term summed over $s$ is
$2\Re\!\big(\mathbf{U}_{\text{RF},k}^*[a,m]\beta_{2,a m}\big)$.

Combining the $\mathbf{U}_{\text{RF},k}[a,m]$-dependent parts of the linear and quadratic terms, the update of the
$(a,m)^{\text{th}}$ entry reduces to
\begin{align}
	\min_{|\mathbf{U}_{\text{RF},k}[a,m]|=1}\;
	2\Re\!\left\{\mathbf{U}_{\text{RF},k}^*[a,m]\big(\beta_{2,a m}-\beta_{1,a m}\big)\right\}.
	\label{eq:scalar_update_U}
\end{align}
Using $\Re\{z\}\ge -|z|$ and $|\mathbf{U}_{\text{RF},k}[a,m]|=1$, we have
$\Re\!\left\{\mathbf{U}_{\text{RF},k}^*[a,m]\big(\beta_{2,a m}-\beta_{1,a m}\big)\right\}
\ge
-\left|\beta_{2,a m}-\beta_{1,a m}\right|,$
and the lower bound is achieved when $\mathbf{U}_{\text{RF},k}^*[a,m](\beta_{2,a m}-\beta_{1,a m})$ is real and negative.
Therefore, 
\begin{eqnarray}
\mathbf{U}_{\text{RF},k}[a,m]
=
-\frac{\beta_{2,a m}-\beta_{1,a m}}{\left|\beta_{2,a m}-\beta_{1,a m}\right|}.\nonumber
\end{eqnarray}
	
We then run coordinate descent by sweeping over the feasible entries of $\mathbf{U}_{\text{RF},k}$ and updating one entry at a time while keeping the rest fixed. Each update preserves the unit-modulus constraint and does not increase the objective value. Under mild conditions (including continuity of the objective and exact minimization of each coordinate subproblem on the unit-modulus set) the objective values are non-increasing and every accumulation point of the iterates is a stationary point \cite{bertsekas1999nonlinear}.

In the presence of PS impairments at the user equipment that introduce random phase errors, one can use the same robust-optimization idea as
for the RF precoder at the transmitter. Specifically, we take the expectation of the scalar coordinate cost in \eqref{eq:scalar_update_U} with respect to the phase errors.
Assume
$\mathbf{U}_{\text{RF},k}= \tilde{\mathbf{U}}_{\text{RF},k} \odot \mathbf{E}_{\text{error}}$, where
$\mathbf{E}_{\text{error}}[a,m]=e^{\jmath e_{a,m}}$ and $e_{a,m}\sim\mathcal{N}(0,\sigma_e^2)$ are independent. Then,
$\mathbb{E}[e^{\jmath e_{a,m}}]=e^{-\sigma_e^2/2}$ and $\mathbb{E}[e^{-\jmath e_{a,m}}]=e^{-\sigma_e^2/2}$.

Since $\beta_{1,a m}$ is independent of $\mathbf{U}_{\text{RF},k}$, we have
$\mathbb{E}_{\mathbf{E}_{\text{error}}}[\mathbf{U}_{\text{RF},k}^*[a,m]\beta_{1,a m}]
=e^{-\sigma_e^2/2}\,\tilde{\mathbf{U}}_{\text{RF},k}^*[a,m]\beta_{1,a m}$.
Moreover, $\beta_{2,a m}$ is linear in the entries of $\mathbf{U}_{\text{RF},k}^{(a,m)}$, hence
$\mathbb{E}_{\mathbf{E}_{\text{error}}}[\beta_{2,a m}]=e^{-\sigma_e^2/2}\,\tilde{\beta}_{2,a m}$, where
$\tilde{\beta}_{2,a m}
\triangleq
\big[\sum_{s\in\mathcal S}
\bar{\mathbf D}_k^s\,\tilde{\mathbf{U}}_{\text{RF},k}^{(a,m)}\mathbf{U}_k^s(\mathbf U_k^{s})^{H}
\big]_{a,m}$.
Since $\mathbf{U}_{\text{RF},k}^*[a,m]$ depends only on the phase error $e_{a,m}$ while $\beta_{2,a m}$ depends only on the errors in
$\mathbf{U}_{\text{RF},k}^{(a,m)}$ (and is therefore independent of $e_{a,m}$), the product factorizes and picks up the factor $e^{-\sigma_e^2}$ as $\mathbb{E}_{\mathbf{E}_{\text{error}}}\!\big[
	\mathbf{U}_{\text{RF},k}^*[a,m]\beta_{2,a m}\big]
	=
	e^{-\sigma_e^2}\,\tilde{\mathbf{U}}_{\text{RF},k}^*[a,m]\tilde{\beta}_{2,a m}.$
Therefore, the expected scalar term in \eqref{eq:scalar_update_U} becomes
\begin{align}
	\min_{|\tilde{\mathbf{U}}_{\text{RF},k}[a,m]|=1}\hspace{-.3cm}
	2\Re\!\left\{\tilde{\mathbf{U}}_{\text{RF},k}^*[a,m]\Big(e^{-\sigma_e^2}\tilde{\beta}_{2,a m}-e^{-\sigma_e^2/2}\beta_{1,a m}\Big)\right\},
	\nonumber
\end{align}
and the closed-form update is
\begin{eqnarray}
\tilde{\mathbf{U}}_{\text{RF},k}[a,m]
=
-\frac{e^{-\sigma_e^2/2}\tilde{\beta}_{2,a m}-\beta_{1,a m}}
{\left|e^{-\sigma_e^2/2}\tilde{\beta}_{2,a m}-\beta_{1,a m}\right|},\nonumber
\end{eqnarray}
 with the convention that if the denominator is zero then $\tilde{\mathbf{U}}_{\text{RF},k}[a,m]$ can be kept unchanged.

	\subsection{The Proposed BCD-based Hybrid Precoding Approach}
In the proposed method, the variable blocks are updated cyclically. At each update, we fix all remaining blocks and optimize only the current
one. Specifically, $\mathbf{U}_k^s$ is updated using \eqref{eq:U}. The digital precoders $\{\mathbf{V}_k^s\}$ are updated via the ADMM-based
Algorithm~\ref{al:admm}. The PS matrix $\mathbf{V}_{\text{RF}}$ is updated using the procedure in Section~\ref{sec:analogp}. Finally, the
analog combiner $\mathbf{U}_{\text{RF},k}$ is updated using the mechanism in Section~\ref{sec:updateU}. 

\begin{figure*}[t!]
	\centering
	\subfloat[\label{fig:conv_bcd}]{
		\includegraphics[width=0.24\textwidth]{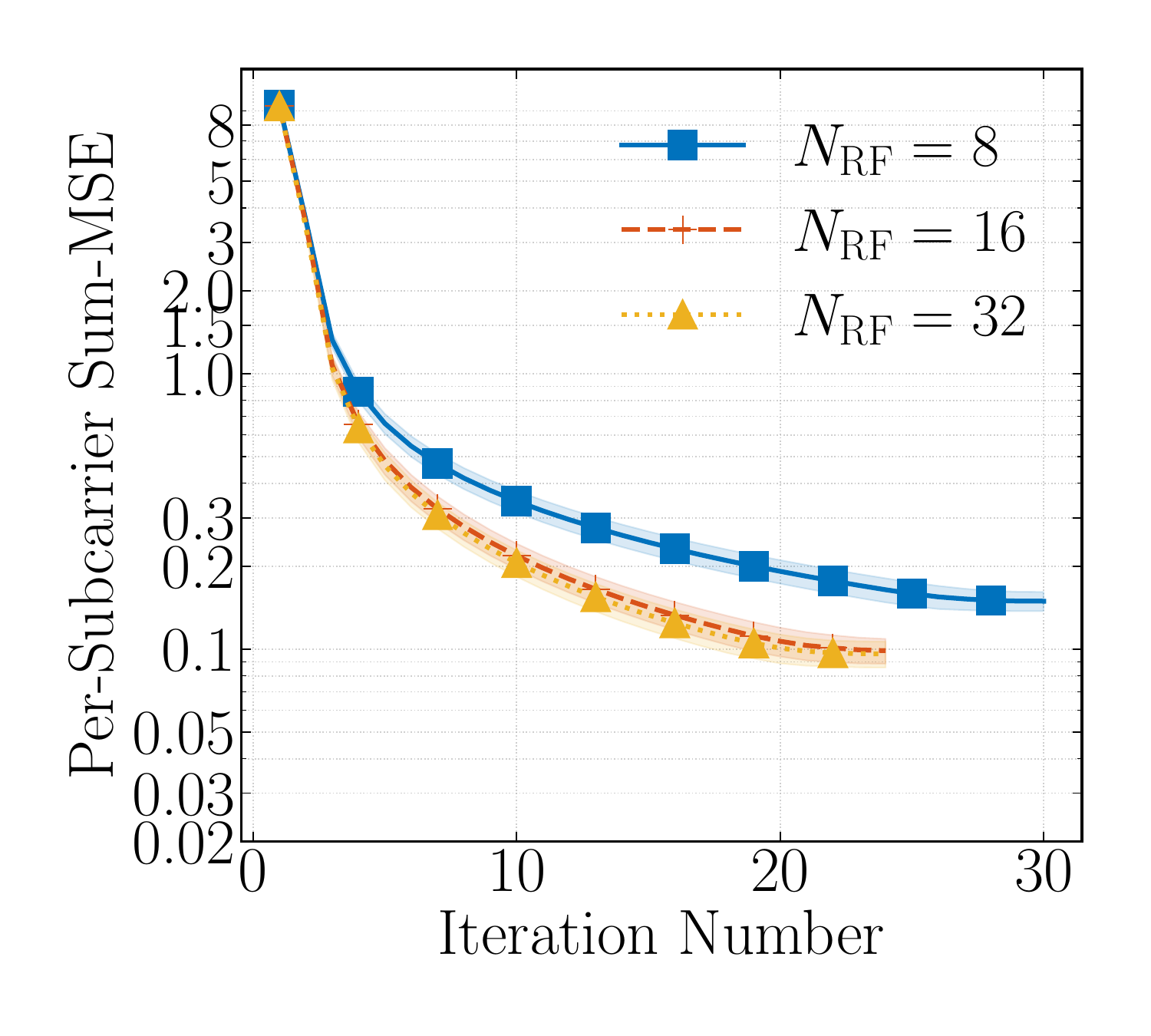}
	}\hfil
	\subfloat[\label{fig:rate_power_32}]{
		\includegraphics[width=0.235\textwidth]{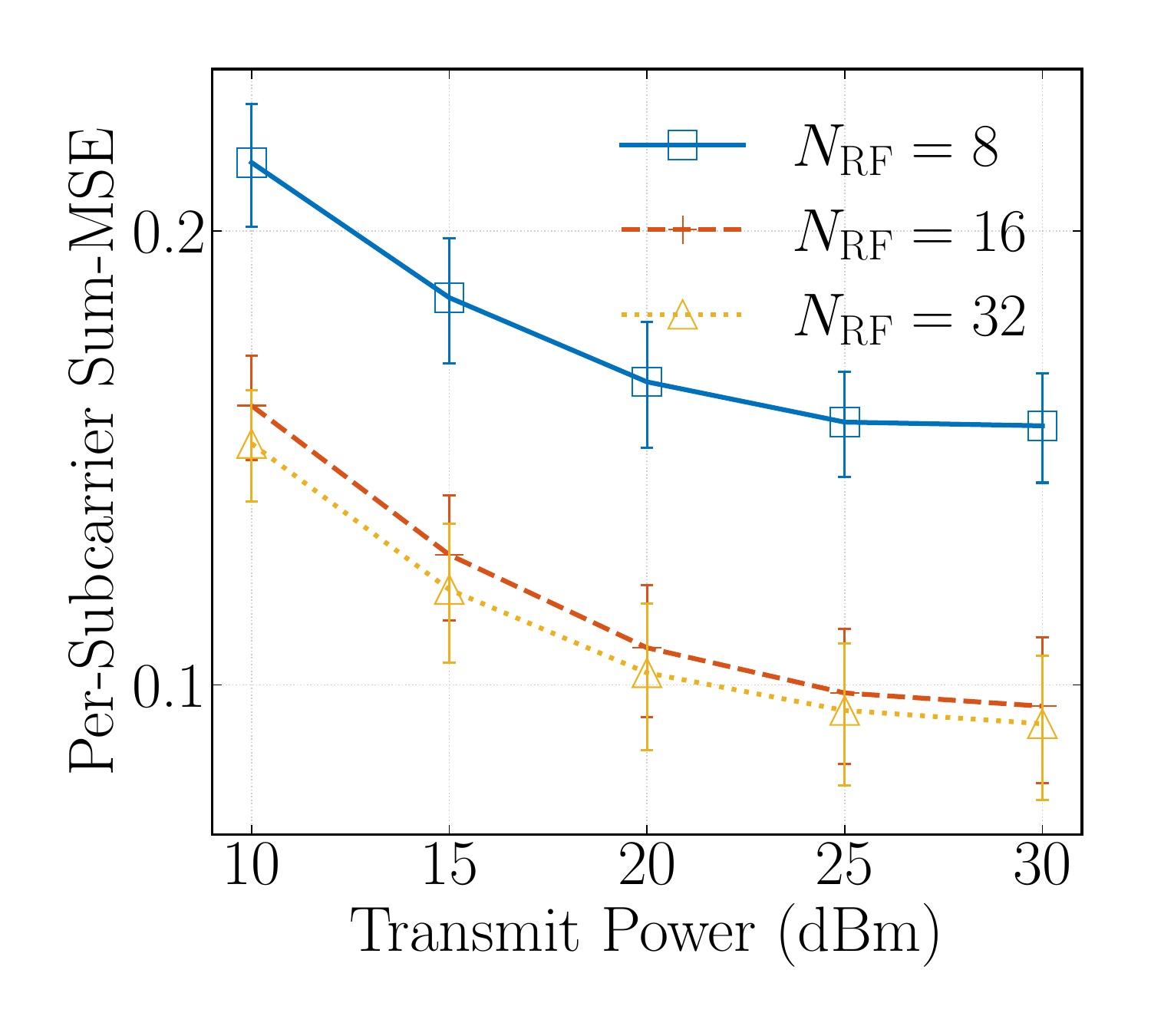}
	}\hfil
	\subfloat[\label{fig:rate_power_64_antenna1}]{
		\includegraphics[width=0.235\textwidth]{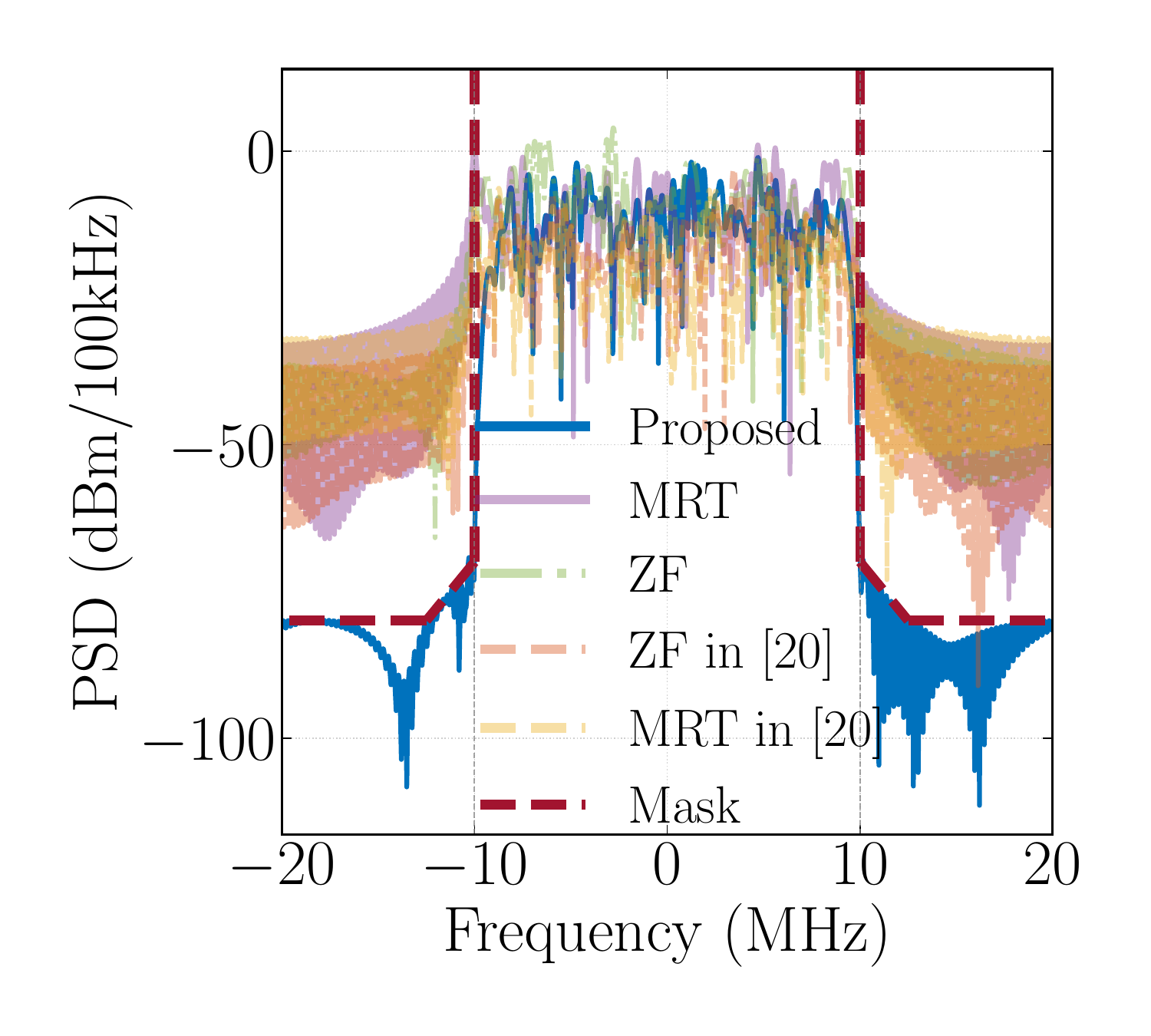}
	}\hfil
	\subfloat[\label{fig:MSE}]{
		\includegraphics[width=0.235\textwidth]{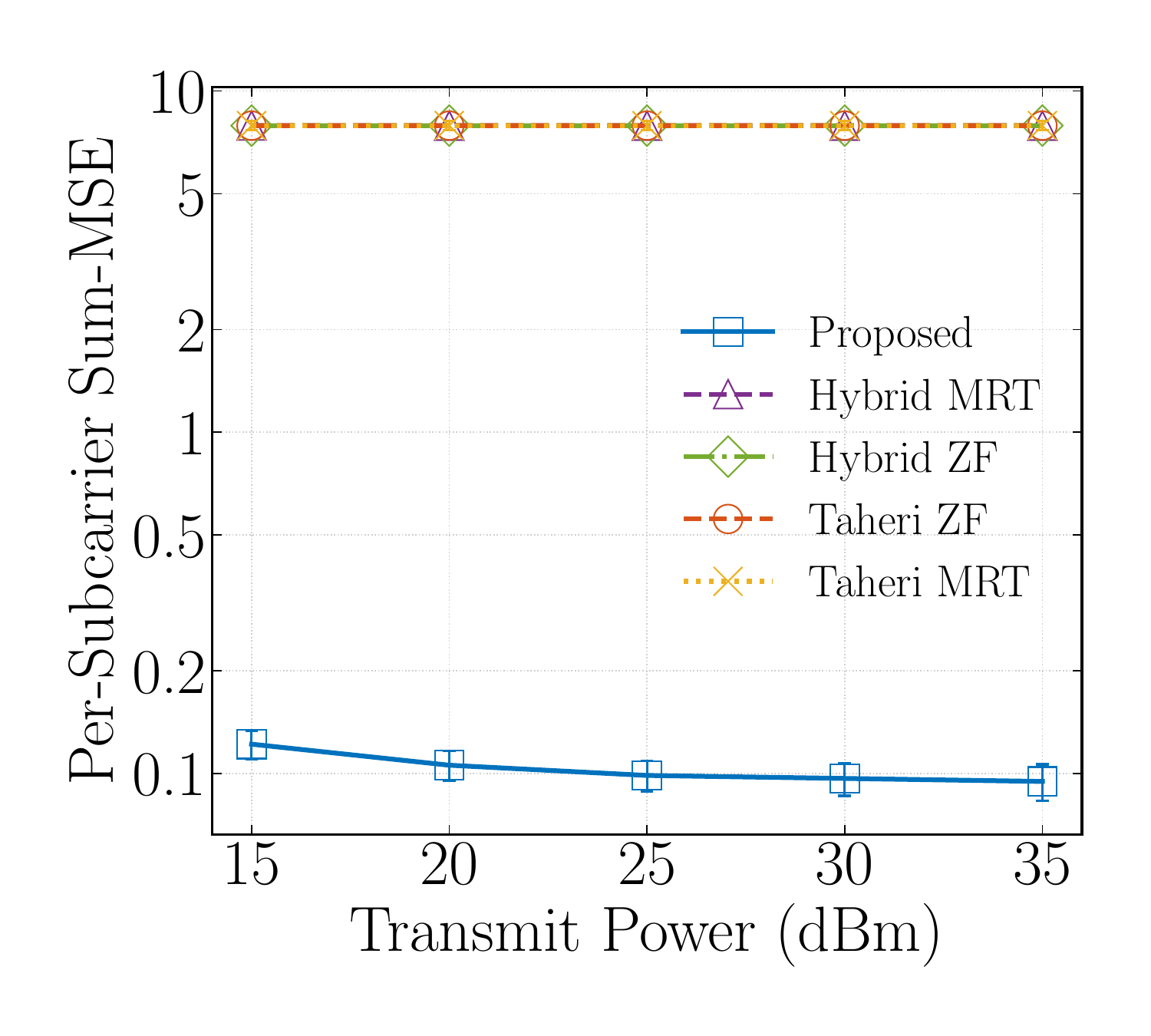}
	}
	\caption{Convergence and performance of the proposed scheme.
		(a)~Convergence of the proposed BCD approach.
		(b)~Average per-subcarrier sum-MSE with $S = 64$.
		(c)~Mask and PSD of transmitted OFDM symbol.
		(d)~Per-subcarrier sum-MSE.}
	\label{fig:summary_panels}
\end{figure*}

\begin{proposition}
	\label{prop:bcd_stationarity_hyb}
	Suppose the assumptions of Proposition~\ref{prop:admm_convergence_hyb} hold. Consider the cyclic outer BCD algorithm that, at each iteration~$\tau$, performs the exact MMSE update~\eqref{eq:U} for $\{\mathbf U_k^s\}$, computes the exact solution of~\eqref{opt:inner4block_hyb} for $(\mathbf Q,\mathbf X,\mathbf W,\mathbf T)$, and solves the RF-precoder and analog-combiner subproblems (Sections~\ref{sec:analogp}--\ref{sec:updateU}) by coordinate descent until stationarity. Let $\boldsymbol{\zeta}^\tau$ collect all primal variables after cycle~$\tau$, and assume $\lambda_{\min}\!\bigl(\mathbf U_{\mathrm{RF},k}^{\tau\,H}\mathbf U_{\mathrm{RF},k}^{\tau}\bigr)\ge\underline\lambda_k>0$ (where $\lambda_{\min}(\cdot)$ denotes the smallest eigenvalue of its matrix argument) for all~$k,\tau$. Then, \emph{(i)}~the regularized objective is monotonically nonincreasing and convergent, and $\{\boldsymbol{\zeta}^\tau\}$ is bounded and hence has accumulation points; \emph{(ii)}~let $\boldsymbol{\zeta}^{\tau,0}:=\boldsymbol{\zeta}^{\tau}$, let $\boldsymbol{\zeta}^{\tau,r}$ be the point after block $r$ in cycle $\tau$, and let $\boldsymbol{\zeta}^{\tau,R}=\boldsymbol{\zeta}^{\tau+1}$. If $\|\boldsymbol{\zeta}^{\tau,r}-\boldsymbol{\zeta}^{\tau,r-1}\|\to 0$ for all $r=1,\ldots,R$, then every accumulation point of $\{\boldsymbol{\zeta}^\tau\}$ is coordinatewise stationary for the regularized problem; \emph{(iii)}~if independent runs indexed by $\nu$ use $(\eta_w^\nu,\eta_t^\nu)\to(0,0)$ with $(\eta_w^\nu,\eta_t^\nu)>\mathbf 0$, and if $\boldsymbol{\zeta}^\nu$ denotes a coordinatewise stationary point returned by the $\nu^{\text{th}}$ run, then every accumulation point of $\{\boldsymbol{\zeta}^\nu\}$ is coordinatewise stationary for the unregularized problem, provided the same bounds $\underline\lambda_k$ hold uniformly for all runs.
\end{proposition}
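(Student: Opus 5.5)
The plan follows the standard descent argument for cyclic BCD, with each block update checked to be monotone. Let $\Phi$ be the regularized outer objective: the sample-average sum-MSE $\frac1B\sum_b\sum_{s,k}\tr(\mathbf E_k^s)$ (with $\mathbf t^s$ and $\mathbf V_{\mathrm{RF}}$ realization-specific), plus $\frac{\eta_w}{2}\|\mathbf W\|_F^2+\frac{\eta_t}{2}\|\mathbf T\|_F^2$ per realization, plus the indicators of the constraint sets.

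\textbf{Part (i): monotonicity and convergence.} I will show that each block update does not increase $\Phi$.
\begin{itemize}
\item The $\{\mathbf U_k^s\}$ update~\eqref{eq:U} is the exact minimizer of a convex quadratic. Its Hessian contains $\sigma^{s\,2}_{\mathrm{noise},k}\mathbf U_{\mathrm{RF},k}^H\mathbf U_{\mathrm{RF},k}\succeq\sigma^{s\,2}_{\mathrm{noise},k}\underline\lambda_k\mathbf I$, so it is strongly convex and the update is a unique global minimizer.
\item The transmit-side block is the exact global solution of the convex problem~\eqref{opt:inner4block_hyb}. This follows from Proposition~\ref{prop:admm_convergence_hyb}, applied independently to each $b$.
\item The RF precoder and analog-combiner blocks are updated by coordinate descent. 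Each scalar update is a closed-form global minimizer over the unit circle, so these sweeps are nonincreasing.
\end{itemize}
Since $\Phi\ge0$, the objective sequence is monotone and bounded below, hence convergent.

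For boundedness of the iterates, I argue block by block:
\begin{itemize}
\item $\mathbf T$ lies in the compact set $\mathcal P$, and $\mathbf W,\mathbf X,\mathbf Q$ are bounded through the linear equalities, as in the proof of Proposition~\ref{prop:admm_convergence_hyb}.
\item The phase-shifter variables lie on products of unit circles.
\item $\mathbf U_k^s$ is bounded by its closed form. The inverse in~\eqref{eq:U} has norm at most $(\sigma^{s\,2}_{\mathrm{noise},k}\underline\lambda_k)^{-1}$, and the right-hand factor is bounded because $\mathbf t^s$, $\mathbf H_k^s$ and the symbols are bounded.
\end{itemize}
Bolzano--Weierstrass then gives accumulation points.

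\textbf{Part (ii): coordinatewise stationarity.} Take a subsequence $\boldsymbol\zeta^{\tau_j}\to\bar{\boldsymbol\zeta}$. The hypothesis $\|\boldsymbol\zeta^{\tau,r}-\boldsymbol\zeta^{\tau,r-1}\|\to0$ implies every intermediate point $\boldsymbol\zeta^{\tau_j,r}$ also converges to $\bar{\boldsymbol\zeta}$. For each block $r$, the update gives an optimality condition: exact block minimization for the convex blocks, and a block first-order condition for the unit-modulus blocks. For the unit-modulus blocks, coordinate descent is run until stationarity, so the Riemannian gradient $\Im\{\mathbf v[a]^*\nabla_{\mathbf v[a]^*}f\}=0$ holds coordinatewise.

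I then pass to the limit:
\begin{itemize}
\item For the convex blocks, I use the inequality $\Phi(\boldsymbol\zeta^{\tau_j,r})\le\Phi(\ldots,\mathbf z_r,\ldots)$ for every feasible $\mathbf z_r$. The smooth parts are continuous, and the feasible sets are closed and fixed (they do not vary with the other blocks). Together these give that $\bar{\boldsymbol\zeta}$ minimizes $\Phi$ in block $r$.
\item For the unit-modulus blocks, continuity of the gradient carries the stationarity equations to the limit.
\end{itemize}

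\textbf{Part (iii): vanishing regularization.} I repeat the compactness argument with bounds that are uniform in $\nu$. These hold because $\mathcal P$ does not depend on $\eta$ and $\underline\lambda_k$ is uniform by assumption. The stationarity conditions depend continuously on $(\eta_w,\eta_t)$: for the transmit block, the KKT system of~\eqref{opt:inner4block_hyb} has the terms $\eta_w\mathbf W,\eta_t\mathbf T$, which vanish in the limit. Passing to the limit in these conditions, using closedness of normal cones (outer semicontinuity of the subdifferential of indicators), yields coordinatewise stationarity for the unregularized problem.

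\textbf{Main obstacle.} I expect the hardest step to be the limit passage in (ii) and (iii) for the convex transmit block, because its optimality is expressed through normal cones rather than equations. I would handle it with the variational inequality $\langle\nabla g,\mathbf z-\bar{\mathbf z}\rangle\ge0$ for all feasible $\mathbf z$, which is preserved under limits because the feasible set is fixed and closed.
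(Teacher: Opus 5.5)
Your proposal is correct and follows essentially the same route as the paper's proof: blockwise monotone descent, then boundedness from $\mathcal P$, the linear equalities, the unit-modulus sets and the $\underline\lambda_k$-based bound on \eqref{eq:U}, then passage of blockwise optimality conditions to the limit along subsequences using the vanishing block increments, with the $\eta_w\mathbf W$ and $\eta_t\mathbf T$ terms disappearing in part (iii). The differences are minor: for the convex blocks you pass the block-minimality (variational) inequality to the limit rather than a KKT system, and for the unit-modulus blocks you use a vanishing coordinatewise Riemannian gradient, whereas the paper takes coordinatewise stationarity to mean that each phase entry minimizes its scalar objective on the unit circle; that stronger notion also follows from your setup, since each scalar update is an exact minimizer and these inequalities pass to the limit by continuity.
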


\begin{proof}
	\emph{Part~(i).} Each block update is nonincreasing: $\{\mathbf U_k^s\}$ is minimized exactly by~\eqref{eq:U}, $(\mathbf Q,\mathbf X,\mathbf W,\mathbf T)$ by Proposition~\ref{prop:admm_convergence_hyb}, and the RF-precoder and analog-combiner sweeps by exact scalar minimization over unit-modulus sets. Since the objective is bounded below, it converges. Boundedness of $\{\boldsymbol{\zeta}^\tau\}$ follows because $\mathbf T^\tau\in\mathcal P$ is bounded, the equalities in~\eqref{opt:inner4block_hyb} propagate boundedness to $\mathbf W^\tau,\mathbf Q^\tau,\mathbf X^\tau$, the RF precoder and analog combiners lie in compact unit-modulus sets, and~\eqref{eq:U} yields $\|\mathbf U_k^{s,\tau}\|_F\le\|\mathbf U_{\mathrm{RF},k}^{\tau}\|_2\|\mathbf H_k^s\|_2\|\boldsymbol\omega_k^s\|_2\sqrt{P^s}/(\sigma_{\mathrm{noise},k}^{s\,2}\underline\lambda_k)$. Hence, $\{\boldsymbol{\zeta}^\tau\}$ is bounded and has accumulation points.
	
	\emph{Part~(ii).} Let $\bar{\boldsymbol{\zeta}}$ be any accumulation point, with $\boldsymbol{\zeta}^{\tau_j}\to\bar{\boldsymbol{\zeta}}$ along a subsequence. Since $\|\boldsymbol{\zeta}^{\tau,r}-\boldsymbol{\zeta}^{\tau,r-1}\|\to 0$ for all $r$, we also have $\boldsymbol{\zeta}^{\tau_j,r}\to\bar{\boldsymbol{\zeta}}$ for every block $r$. For the unconstrained/convex blocks $\{\mathbf U_k^s\}$ and $(\mathbf Q,\mathbf X,\mathbf W,\mathbf T)$, the exact first-order/KKT conditions hold at $\boldsymbol{\zeta}^{\tau_j,r}$ and are preserved in the limit by continuity of the smooth terms and outer semicontinuity of the normal-cone mappings for $\mathcal M$, $\mathcal C$, and $\mathcal P$. For the RF-precoder and analog-combiner blocks, coordinatewise stationarity means each scalar feasible entry minimizes its coordinate objective over the unit circle with all other variables fixed; since these objectives are continuous and the unit circle is compact, the scalar optimality inequalities are preserved in the limit. Hence $\bar{\boldsymbol{\zeta}}$ is coordinatewise stationary for the regularized problem.
	
	\emph{Part~(iii).} The bounds from Part~(i) ensure $\{\boldsymbol{\zeta}^\nu\}$ is bounded; let $\boldsymbol{\zeta}^{\nu'}\to\bar{\boldsymbol{\zeta}}$ along a convergent subsequence. Each $\boldsymbol{\zeta}^{\nu'}$ satisfies the blockwise stationarity conditions of the regularized problem. The only extra terms relative to the unregularized problem are $\eta_w^{\nu'}\mathbf W^{\nu'}$ and $\eta_t^{\nu'}\mathbf T^{\nu'}$, which vanish since $\{\mathbf W^{\nu'}\},\{\mathbf T^{\nu'}\}$ are bounded and $(\eta_w^{\nu'},\eta_t^{\nu'})\to(0,0)$. Passing to the limit by the same continuity and compactness arguments as in Part~(ii) (outer semicontinuity of normal cones for the convex blocks, and preservation of scalar optimality inequalities on the unit circle for the RF-precoder and analog-combiner blocks) shows that $\bar{\boldsymbol{\zeta}}$ is coordinatewise stationary for the unregularized problem. As the convergent subsequence was arbitrary, the same argument applies to every accumulation point of $\{\boldsymbol{\zeta}^\nu\}$.
\end{proof}
	
		\subsection{Complexity Analysis}
	\label{sec:comp}
In this section, we quantify the computational complexity of the sequential optimization. For the RF precoder, updating a single PS entry has per-iteration cost $\mathcal{O}(N_t)$, and updating the full RF precoder once (i.e., sweeping all PS entries) has cost $\mathcal{O}(N_t^2)$. For the analog combiner update, computing $\beta_{1, a m}$ costs $\mathcal{O}(S)$, while computing $\beta_{2, a m}$ costs $\mathcal{O}(S n_k N_r)$; consequently, the per-variable update cost for the analog combiner is $\mathcal{O}(S n_k N_r)$. In the ADMM routine, all subproblems admit closed-form updates except for $\mathbf t$. Updating $\mathbf t$ within each bisection step requires one $\mathcal{O}(N_{\text{RF}}^3)$ inversion of an $N_{\text{RF}}\times N_{\text{RF}}$ matrix, followed by a matrix multiplication with cost $\mathcal{O}(N_{\text{RF}} n_k)$. The per-antenna update costs for $\mathbf q^a$, $\mathbf w^a$, and $\mathbf x^a$ are $\mathcal{O}(GS)$, $\mathcal{O}(S^2)$ after a one-time $\mathcal{O}(S^3)$ factorization of the common matrix $\mathbf M_w$, and $\mathcal{O}(\ell S^2)$, respectively. When $G\ll S$, the update of $\mathbf w^a$ can alternatively be implemented exactly via the matrix inversion lemma, replacing the inversion of an $S\times S$ matrix by that of a $G\times G$ matrix. Finally, updating $\mathbf V_k^s$ for each user and subcarrier has cost $\mathcal{O}(N_{\text{RF}} n_k)$.
	
		\section{Simulation Results}
	\label{sec:sim}

\begin{figure*}[ht!]
	\centering
	\hspace{.5cm}\includegraphics[width=1.04\textwidth]{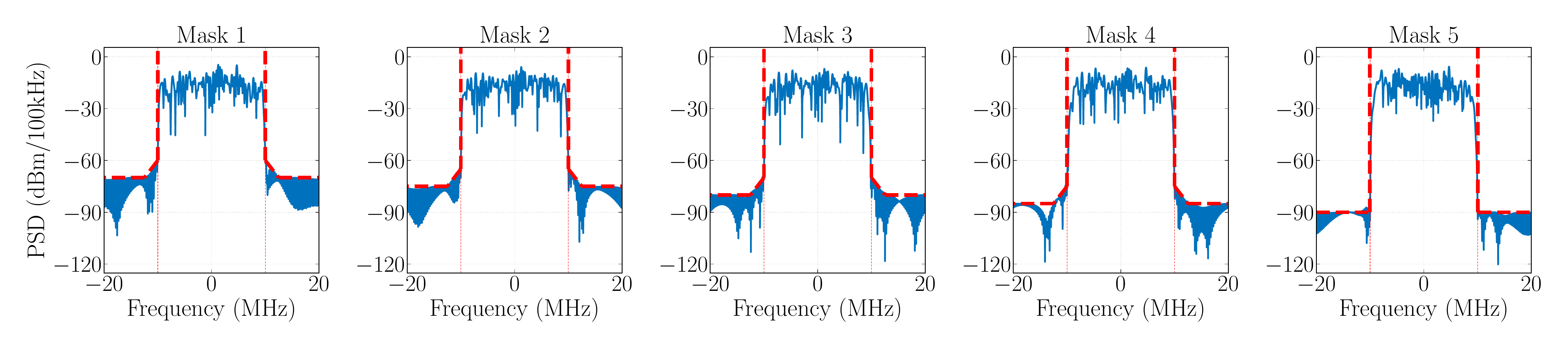}
	\caption{PSD of the transmitted 20-MHz-bandwidth OFDM symbol at antenna 1 for 5 different masks.}
	\label{fig:psd}
\end{figure*}

\begin{figure*}[t!]
	\centering
	\subfloat[\label{fig:emission_rate}]{
		\includegraphics[width=0.315\textwidth]{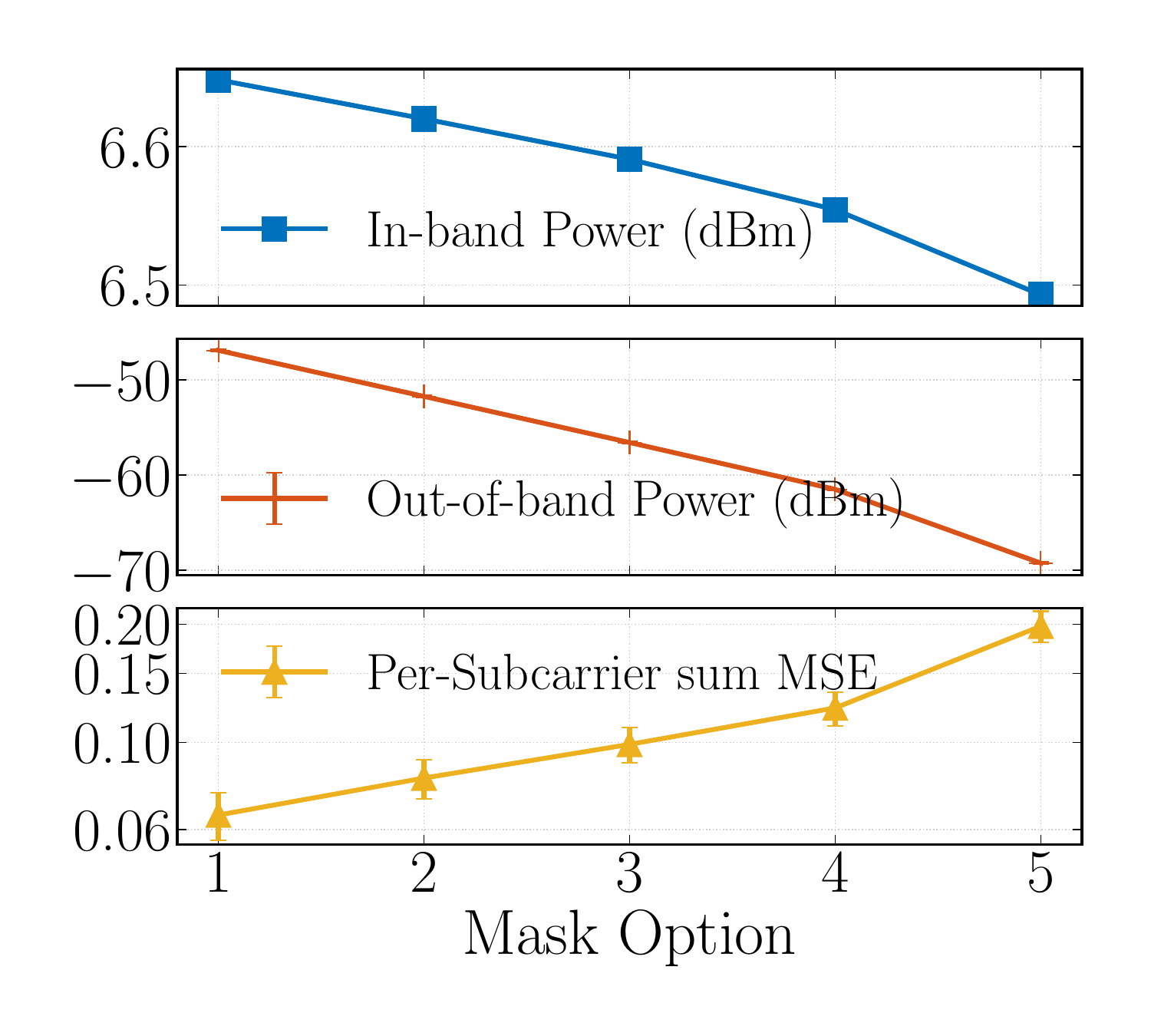}
	}\hfil
	\subfloat[\label{fig:ser}]{
		\includegraphics[width=0.315\textwidth]{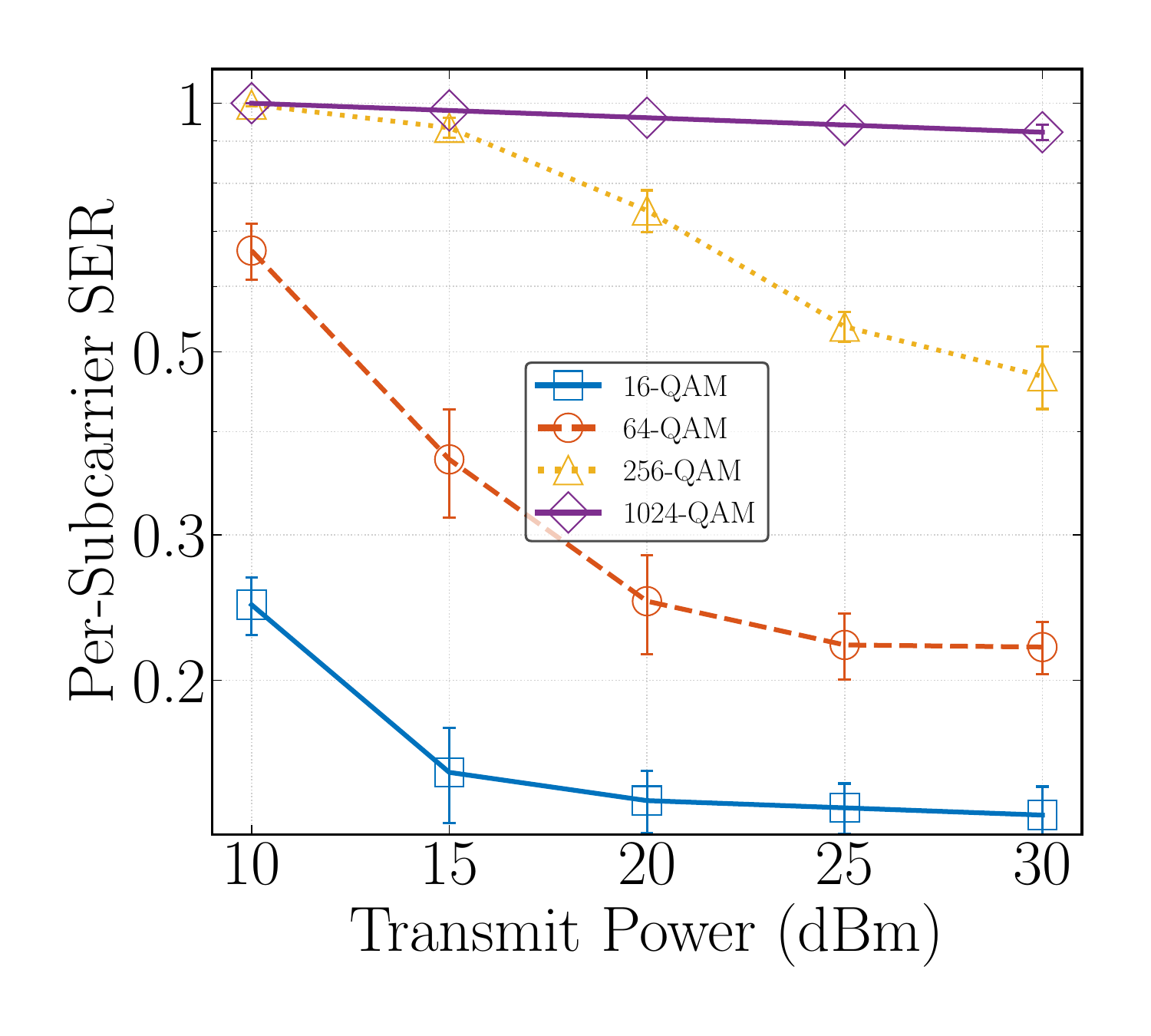}
	}\hfil
	\subfloat[\label{fig:evm}]{
		\includegraphics[width=0.315\textwidth]{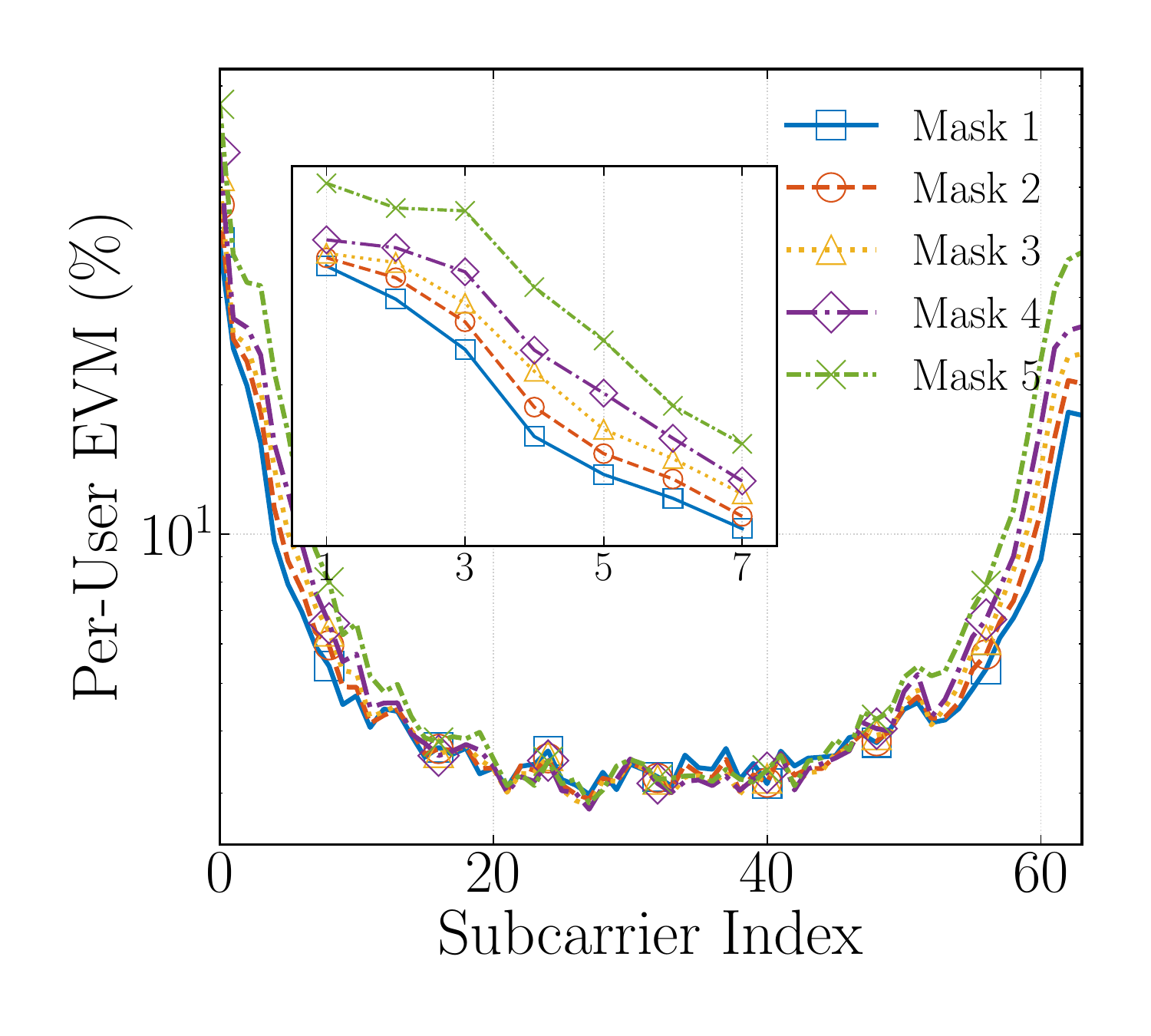}
	}
	\caption{
		(a)~In-band/out-of-band emitted power and the average per-subcarrier sum-MSE for 5 different masks.
		(b)~Per-subcarrier SER of the proposed method.
		(c)~Average per-user EVM versus subcarrier for different masks.}
\end{figure*}

Simulations are conducted to evaluate the proposed schemes in an MU downlink MIMO--OFDM system. The base station employs
$N_t=32$ antennas and serves $K=4$ users, where each user is equipped with $N_r=4$ antennas, $2$ RF chains, and $n_k=2$. The total bandwidth is
$20$~MHz with center frequency $f^c=28$~GHz and $S=64$, and all users occupy all subcarriers. The users are randomly distributed in a circle of radius \(4\) meters, located
\(100\) meters from the transmitter. The symbols are independently drawn from a normalized 64-QAM constellation, unless otherwise stated, and $B=30$.

Uniform linear arrays (ULAs) are assumed at both the transmitter and receivers, with responses
$\mathbf{a}_t(\theta)=\big[1,e^{\jmath\psi},e^{\jmath2\psi},\ldots,e^{\jmath(N_t-1)\psi}\big]^T$ and
$\mathbf{a}_r(\phi)=\big[1,e^{\jmath\psi'},e^{\jmath2\psi'},\ldots,e^{\jmath(N_r-1)\psi'}\big]^T$, where
$\psi=\frac{2\pi d_t\sin(\theta)}{\lambda}$, $\psi'=\frac{2\pi d_r\sin(\phi)}{\lambda}$,
and $\theta$, $\phi$ are measured from each array's broadside.
We set $d_t=d_r=\lambda/2$; at $28$~GHz, $\lambda\approx 10.7$~mm.
Each user's receive array is oriented with its broadside facing the transmitter,
and the LOS angle of departure $\theta_k^{\text{AoD}}$ is determined by
user~$k$'s geometric position relative to the transmitter array broadside.

A frequency-selective Rician MIMO--OFDM channel with $T$ taps is considered. For user $k$ on subcarrier $s$, the channel matrix $\mathbf{H}_k^s\in\mathbb{C}^{N_r\times N_t}$ is generated as~\cite{bjornson2024introduction} $\mathbf{H}_k^s=\sqrt{\frac{\kappa}{\kappa+1}}\sqrt{g_k}\,\mathbf{a}_r(\phi_k^{\text{AoA}})\mathbf{a}_t^H(\theta_k^{\text{AoD}})+\sum_{l=1}^{T-1}\sqrt{\frac{1}{\kappa+1}}\sqrt{g_{k,l}}\,\mathbf{a}_r(\phi_{k,l}^{\text{AoA}})\mathbf{a}_t^H(\theta_{k,l}^{\text{AoD}})\,h_{k,l}\,e^{-\jmath 2\pi l s/S}$. Here, $h_{k,l}\sim\mathcal{CN}(0,1)$. The Rician factor is $\kappa_{\text{dB}}=10$~dB.

The large-scale fading follows the 3GPP model in \cite{3GPP_TR_36_814_2017}. The LOS path loss is
$PL_{\text{dB}}^{\text{LOS}}=22\log_{10}(d_k)+28+20\log_{10}(f^c)+\xi_{\text{LOS}}$, where $d_k$ is the user distance (m),
$f^c$ is the carrier frequency (GHz), and $\xi_{\text{LOS}}\sim\mathcal{N}(0,\sigma_{\text{LOS}}^2)$ with $\sigma_{\text{LOS}}=5.8$~dB; hence,
$g_k=10^{-PL_{\text{dB}}^{\text{LOS}}/10}$. For NLOS components,
$PL_{\text{dB}}^{\text{NLOS}}=22\log_{10}(d_k)+28+20\log_{10}(f^c)+\xi_{\text{NLOS}}$, where
$\xi_{\text{NLOS}}\sim\mathcal{N}(0,\sigma_{\text{NLOS}}^2)$ with $\sigma_{\text{NLOS}}=8.7$~dB, and
$g_{k,l}=10^{-PL_{\text{dB}}^{\text{NLOS}}/10}$. For NLOS paths, departure and arrival angles are generated around the LOS directions as
$\theta_{k,l}^{\text{AoD}}=\theta_k^{\text{AoD}}+\Delta\theta_{k,l}$ and
$\phi_{k,l}^{\text{AoA}}=\phi_k^{\text{AoA}}+\Delta\phi_{k,l}$, where
$\Delta\theta_{k,l},\Delta\phi_{k,l}\sim\mathcal{N}(0,\sigma_\theta^2)$. Additive noise at user $k$ on subcarrier $s$ is modeled as
$\mathbf{n}_k^s\sim\mathcal{CN}\!\big(\mathbf{0},\sigma_{\text{noise},k}^{s\,2}\mathbf{I}_{N_r}\big)$.
The noise power spectral density is $-174$~dBm/Hz and the receiver noise figure is $8$~dB.
	
We set the maximum admissible amplitude of the unclipped waveform to $\chi=3$. Unless stated otherwise, we use a mask that is inactive for $|f|<10.01$~MHz, tightens linearly from $-70$ to $-80$~dBm/$100$~kHz over $|f|\in[10.01,12.5]$~MHz, and remains flat at $-80$~dBm/$100$~kHz for $|f|\ge 12.5$~MHz. We enforce the mask (and the notches over $[-18,-10.01]$~MHz and $[10.01,18]$~MHz) using $90$ uniformly spaced frequency samples per side to form $\mathbf A_n$. For $64$ subcarriers and a per-subcarrier power
budget of $25$~dBm, Fig.~\ref{fig:conv_bcd} depicts the per-subcarrier convergence of the proposed BCD algorithm for different numbers
of RF chains. The sum-MSE decreases monotonically with the iteration index. For the $64$-subcarrier configuration, the user-averaged per-subcarrier sum-MSE is shown in
Fig.~\ref{fig:rate_power_32} as the per-subcarrier transmit power budget is swept from $10$~dBm to $30$~dBm in $5$~dB steps; the average
per-subcarrier sum-MSE decreases with increasing transmit power and is further reduced when more RF chains are employed. For the case $N_{\text{RF}}=16$ and $P^s=30$ dBm, Fig.~\ref{fig:rate_power_64_antenna1} shows the PSD of the transmitted OFDM symbol from antenna~1 together with the corresponding spectral mask. By enforcing the PSD constraints at 90 frequency samples on each side of the band, the resulting spectrum remains below the mask over the entire OOB region. For comparison, conventional ZF and MRT are also included, along with ZF and MRT combined with frequency notching based on the same $\mathbf{A}_n$ as in \cite{taheri2020joint}. It can be seen that the proposed method achieves the lowest sidelobes. These approaches are compared in minimizing the sum-MSE for different $P^s$ values in Fig. \ref{fig:MSE}. The proposed method not only yields lower OOB emission, but also achieves significantly lower sum-MSE values. This is due the fact that the proposed method deployes an optmized combiner at each user, by which the received signal is strengthened and decoded.

Fig.~\ref{fig:psd} compares five OOB mask profiles. For Masks~1--4, the mask is inactive for $|f|<10.01$~MHz; over
$|f|\in[10.01,12.5]$~MHz it tightens linearly from $-60/-65/-70/-75$~dBm/$100$~kHz to $-70/-75/-80/-85$~dBm/$100$~kHz; and it remains
constant at $-70/-75/-80/-85$~dBm/$100$~kHz for $|f|\ge 12.5$~MHz. Mask~5 starts at $|f|=10.01$~MHz and enforces a flat limit of
$-90$~dBm/$100$~kHz for $|f|\ge 10.01$~MHz.

The in-band emissions, OOB emissions, and the average  per-subcarrier sum-MSE are depicted in Fig.~\ref{fig:emission_rate} for
$P^s=25$~dBm and $N_{\mathrm{RF}}=16$. Under Mask~1, the optimized hybrid precoder achieves $6.65$~dBm in-band emissions and $-46.93$~dBm OOB
emissions, with an average  per-subcarrier sum MSE of $6.54\times 10^{-2}$. Tightening the mask progressively reduces OOB emissions from
$-46.93$~dBm (Mask~1) to $-51.76$~dBm (Mask~2), $-56.59$~dBm (Mask~3), $-61.51$~dBm (Mask~4), and $-69.25$~dBm (Mask~5), while the in-band
emissions remain nearly unchanged at $6.49$--$6.65$~dBm. This improved OOB suppression comes at the cost of increased MSE, as the average
 per-subcarrier sum MSE increases from $6.54\times 10^{-2}$ (Mask~1) to $8.12\times 10^{-2}$ (Mask~2), $9.89\times 10^{-2}$ (Mask~3),
$1.225\times 10^{-1}$ (Mask~4), and $1.977\times 10^{-1}$ (Mask~5).

The per-subcarrier symbol-error rate (SER) for different QAM orders is shown in Fig.~\ref{fig:ser}. As expected, the SER decreases as the modulation order is reduced.

The average per-user error vector magnitude (EVM) across subcarriers is shown in Fig.~\ref{fig:evm}. The EVM on the central subcarriers is lower than that on
the edge subcarriers, since the imposed masks require stronger suppression near the band edges, which reduces precoding gains and increases the
sum-MSE to satisfy the OOB constraints.

\begin{figure}
	\centering
	\includegraphics[width=.37\textwidth]{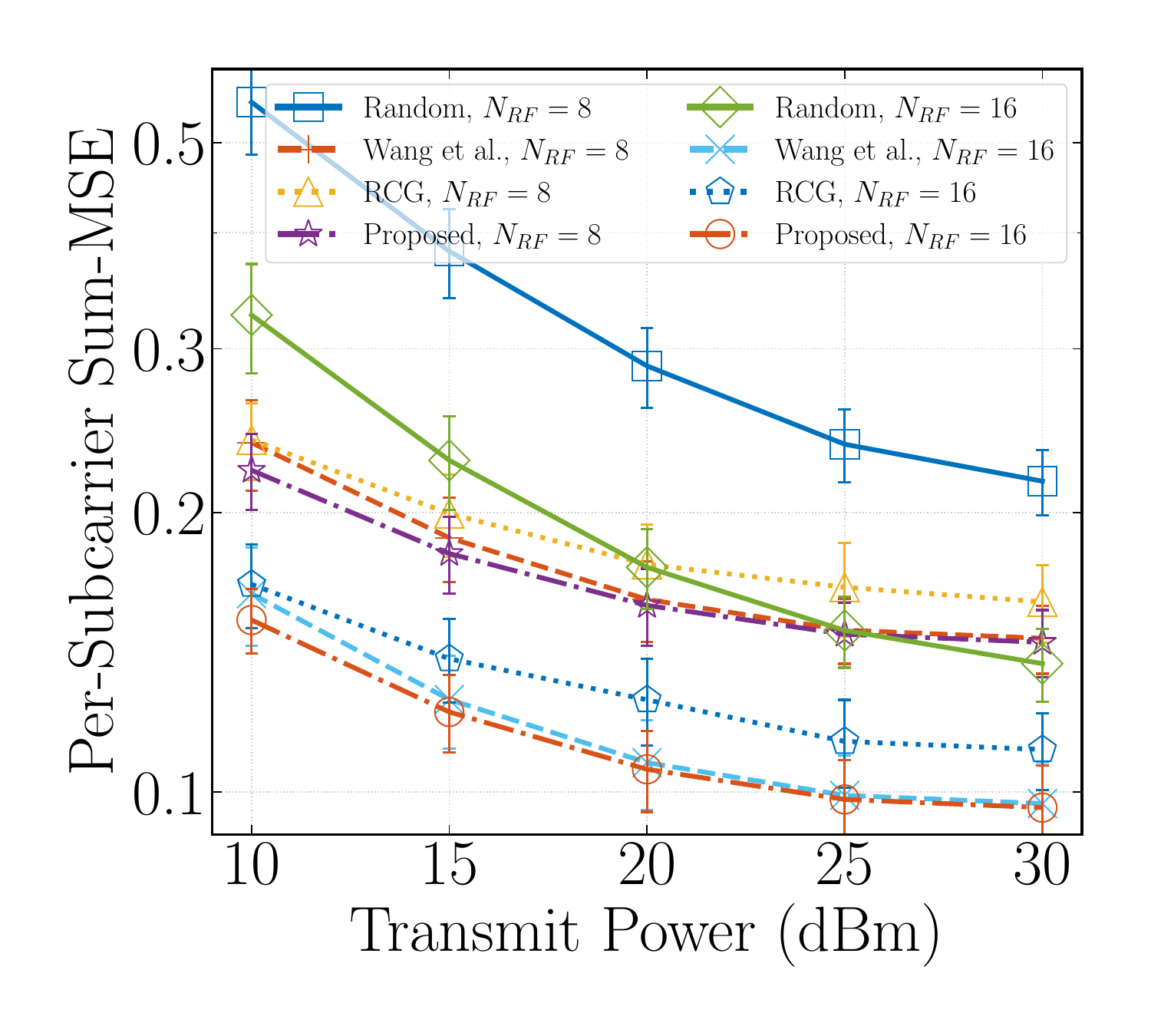}
	\caption{Average per-subcarrier sum-MSE of different hybrid precoding methods.}\label{fig:analog_rate}
\end{figure}
	
		\begin{figure}
		\centering
		\includegraphics[width=.37\textwidth]{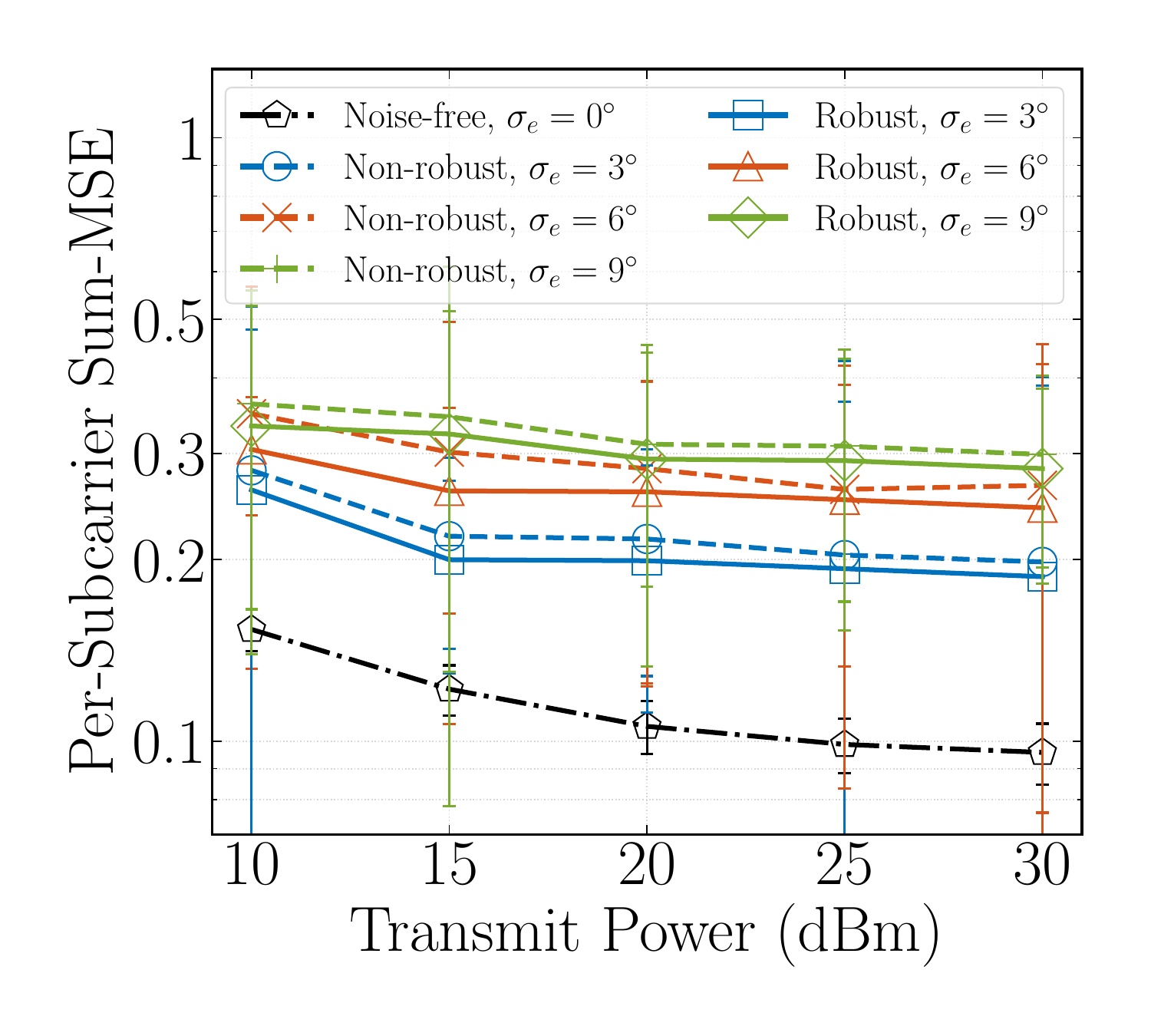}
		\caption{The proposed BCD methods with different PS noise levels when $N_{\text{RF}}=16$.}\label{fig:robust1}
	\end{figure}
We compare the proposed RF precoding--combining design with the coordinate-descent approach of Wang \textit{et al.}~\cite{wang2018hybrid}, which
iteratively reduces the objective by alternately updating individual PSs at the transmitter or for a given user through a numerical line search,
while keeping all remaining PS values fixed. We also include the RCG-based method considered in \cite{7397861,9467491}, and a baseline with
randomly assigned PS values for both the precoder and the combiners. Fig.~\ref{fig:analog_rate} depicts the average per-subcarrier sum-MSE
for different numbers of RF chains, where the proposed method consistently achieves the lowest sum-MSE among all approaches.

Fig.~\ref{fig:robust1} compares the proposed robust design with the non-robust (with PS update rules not accounting for PS errors) schemes for
$N_{\mathrm{RF}}=16$. We evaluate PS impairments by considering phase-noise levels $\sigma_e\in\{3^\circ,6^\circ,9^\circ\}$. As shown
in Fig.~\ref{fig:robust1}, phase noise degrades the average per-subcarrier sum-MSE for all methods. However, when the proposed robust PS
update rules are employed, the resulting algorithm consistently achieves lower sum-MSE than its non-robust counterpart in the presence of PS impairments.

	\section{Concluding Remarks}
	\label{sec:con}
	
In this work, we studied hybrid RF--digital precoder and combiner design for MU-MIMO--OFDM while accounting for PS impairments and explicitly enforcing amplitude clipping and OOB spectral mask constraints per OFDM symbol realization. We developed an MMSE-based BCD algorithm that minimizes the sum-MSE across all users and subcarriers. The resulting nonconvex problem is addressed block by block: the digital precoder update is carried out via an ADMM routine whose substeps admit closed-form solutions or simple bisection searches, while the RF precoder and the users' analog combiners are updated using coordinate-descent rules with closed-form phase updates that satisfy the unit-modulus constraints, including robust variants under PS errors. All blocks are updated in a Gauss--Seidel fashion until the stopping criterion is met. Simulation results demonstrate consistent performance gains over standard baseline methods.

	\ifCLASSOPTIONcaptionsoff
	\newpage
	\fi

	\bibliographystyle{IEEEbib}
	\bibliography{ref_SDRA}

@article{tao2018convergence,
		title={Convergence analysis of the direct extension of {ADMM} for multiple-block separable convex minimization},
		author={M. Tao and X. Yuan},
		journal={Adv. Comput. Math.},
		volume={44},
		number={3},
		pages={773--813},
		year={Jun. 2018},
		publisher={Springer}
	}

@misc{fcc_uwb_15_517,
	author       = {{FCC}},
	title        = {{47 C.F.R. \S 15.517}},
	howpublished = {\url{https://www.ecfr.gov/current/title-47/part-15/section-15.517}},
	note         = {eCFR, accessed Mar. 29, 2026}
}

@misc{fcc_cbrs_96_41,
	author       = {{FCC}},
	title        = {{47 C.F.R. \S 96.41}},
	howpublished = {\url{https://www.ecfr.gov/current/title-47/part-96/section-96.41}},
	note         = {eCFR, accessed Mar. 29, 2026}
}

@article{bjornson2024introduction,
		title={Introduction to multiple antenna communications and reconfigurable surfaces},
		author={E. Bj{\"o}rnson and {\"O}. Demir},
		year={2024},
		publisher={Now Publishers, Inc.}
	}

@ARTICLE{7956180,
		author={R. Garg and A. S. Natarajan},
		journal={IEEE Trans. Microw. Theory Tech.}, 
		title={A 28-{GHz} Low-Power Phased-Array Receiver Front-End With 360$^\circ$ {RTPS} Phase Shift Range}, 
		year={Jun. 2017},
		volume={65},
		number={11},
		pages={4703-4714},
		doi={10.1109/TMTT.2017.2707414}}

@ARTICLE{7948784,
			author={C.W. Byeon and C.S. Park},
			journal={IEEE Microw. Wireless Compon. Lett}, 
			title={A Low-Loss Compact 60-{GHz} Phase Shifter in 65-nm {CMOS}}, 
			year={Jun. 2017},
			volume={27},
			number={7},
			pages={663-665},
			doi={10.1109/LMWC.2017.2711569}}

@ARTICLE{9767793,
		author={J. Li and Z. Wang and Y. Zhang and P. Zhu and D. Wang and X. You},
		journal={IEEE Syst. J.}, 
		title={Robust Hybrid Beamforming for Outage-Constrained Multigroup Multicast {mmWave} Transmission With Phase Shifter Impairments}, 
		year={Mar. 2023},
		volume={17},
		number={1},
		pages={869-880},
		doi={10.1109/JSYST.2022.3168022}}

@techreport{3GPP_TR_36_814_2017,
	author    = {{3GPP TR 36.814}},
	title     = {Further advancements for {E-UTRA} physical layer aspects},
	institution = {3rd Generation Partnership Project (3GPP)},
	year      = {2017},
	month     = {Mar.},
	note      = {Release 9, V9.2.0}
}

@ARTICLE{10778226,
		author={M. Alouzi and H. Yanikomeroglu and G. Karabulut Kurt},
		journal={IEEE Trans. Wireless Commun.}, 
		title={Adaptive Phase Shifters for Hybrid Beamforming in {mmWave} Systems}, 
		year={Feb. 2025},
		volume={24},
		number={2},
		pages={1104-1116},
		doi={10.1109/TWC.2024.3505202}}

@ARTICLE{10972245,
		author={H. Vaezy and S.D. Blostein},
		journal={IEEE Trans. Signal Process.}, 
		title={Joint User Selection and Hybrid Precoder Design for Massive {MIMO} Systems}, 
		year={Apr. 2025},
		volume={73},
		number={},
		pages={1808-1822},
		doi={10.1109/TSP.2025.3562858}}

@ARTICLE{10839437,
		title={Hybrid Precoding for {mmWave} Massive {MIMO} With Finite Blocklength},
		author={X. Zhang and L. Xiang and J. Wang and P. Zhu and D.W.K. Ng and X. Gao},
		journal={IEEE Trans. Commun.}, 
		year={Aug. 2025},
		volume={73},
		number={8},
		pages={6379-6395},
		doi={10.1109/TCOMM.2025.3529244}}

@ARTICLE{8048606,
	author={D. Nguyen and L. Le Bao and T.  Le-Ngoc and R.W. Heath},
	journal={IEEE Access}, 
	title={Hybrid {MMSE} Precoding and Combining Designs for {mmWave} Multiuser Systems}, 
	year={Sep. 2017},
	volume={5},
	number={},
	pages={19167-19181},
	doi={10.1109/ACCESS.2017.2754979}}

@ARTICLE{9151131,
	author={J. Galaviz-Aguilar V. Alejandro and T. Cesar and E. Tlelo-Cuautle},
	journal={IEEE Access}, 
	title={{RF-PA} Modeling of {PAPR}: A Precomputed Approach to Reinforce Spectral Efficiency}, 
	year={Jul. 2020},
	volume={8},
	number={},
	pages={138217-138235},
	doi={10.1109/ACCESS.2020.3012610}}

@ARTICLE{10048700,
		author={S. Liu and Y. Wang and Z. Lian and Y. Su and Z. Xie},
		journal={IEEE Trans. Broadcast.}, 
		title={Joint Suppression of {PAPR} and {OOB} Radiation for {OFDM} Systems}, 
		year={Feb. 2023},
		volume={69},
		number={2},
		pages={528-537},
		doi={10.1109/TBC.2023.3243410}}

@ARTICLE{9069262,
	author={W. Wang and H. Yin and X. Chen and W. Wang},
	journal={IEEE Access}, 
	title={Robust and Low-Overhead Hybrid Beamforming Design With Imperfect Phase Shifters in Multi-User Millimeter Wave Systems}, 
	year={May 2020},
	volume={8},
	number={},
	pages={74002-74014},
	doi={10.1109/ACCESS.2020.2988267}}

@ARTICLE{9467491,
	author={X. Zhao and T. Lin and Y. Zhu and J. Zhang},
	journal={IEEE Trans. Wireless Commun.}, 
	title={Partially-Connected Hybrid Beamforming for Spectral Efficiency Maximization via a Weighted {MMSE} Equivalence}, 
	year={Jul. 2021},
	volume={20},
	number={12},
	pages={8218-8232},
	doi={10.1109/TWC.2021.3091524}}

@ARTICLE{1468466,
		author={M. Joham and W. Utschick and J.A. Nossek},
		journal={IEEE Trans. Signal Process.}, 
		title={Linear transmit processing in {MIMO} communications systems}, 
		year={Aug. 2005},
		volume={53},
		number={8},
		pages={2700-2712},
		doi={10.1109/TSP.2005.850331}}

@article{van2009sculpting,
		title={Sculpting the multicarrier spectrum: a novel projection precoder},
		author={J. Van De Beek},
		journal={IEEE Commun. Lett.},
		volume={13},
		number={12},
		pages={881--883},
		year={Dec. 2009},
		publisher={IEEE}
	}

@ARTICLE{9390405,
		author={S. Kant and M. Bengtsson and B. Göransson and G. Fodor and C. Fischione},
		journal={IEEE Trans. Wireless Commun.}, 
		title={Efficient Optimization for Large-Scale {MIMO}-{OFDM} Spectral Precoding}, 
		year={Sep. 2021},
		volume={20},
		number={9},
		pages={5496-5513},
		doi={10.1109/TWC.2021.3068207}}

@ARTICLE{8653302,
	author={D. Mishra and H. Johansson},
	journal={IEEE Trans. Commun.}, 
	title={Optimal Channel Estimation for Hybrid Energy Beamforming Under Phase Shifter Impairments}, 
	year={Jun. 2019},
	volume={67},
	number={6},
	pages={4309-4325},
	doi={10.1109/TCOMM.2019.2901790}}

@ARTICLE{6459499,
	author={A. Tom and A. Sahin and H. Arslan},
	journal={IEEE Commun. Lett.}, 
	title={Mask Compliant Precoder for {OFDM} Spectrum Shaping}, 
	year={Mar. 2013},
	volume={17},
	number={3},
	pages={447-450},
	doi={10.1109/LCOMM.2013.020513.122495}}

@ARTICLE{7485853,
	author={R. Kumar and A. Tyagi},
	journal={IEEE Trans. Cognit. Commun. Netw.}, 
	title={Computationally Efficient Mask-Compliant Spectral Precoder for {OFDM} Cognitive Radio}, 
	year={Mar. 2016},
	volume={2},
	number={1},
	pages={15-23},
	doi={10.1109/TCCN.2016.2577039}}

@ARTICLE{9214877,
	author={S. Kant and M. Bengtsson and G. Fodor and B. Göransson and C. Fischione, Carlo},
	journal={IEEE Trans. Wireless Commun.}, 
	title={{EVM}-Constrained and Mask-Compliant {MIMO}-{OFDM} Spectral Precoding}, 
	year={Jan. 2021},
	volume={20},
	number={1},
	pages={590-606},
	doi={10.1109/TWC.2020.3027345}}

@article{wang2018hybrid,
	title={Hybrid precoder and combiner design with low-resolution phase shifters in {mmWave} {MIMO} systems},
	author={Z. {Wang} and M. {Li} and Q. {Liu} and A. L. {Swindlehurst}},
	journal={IEEE J. Sel. Topics Signal Process.},
	volume={12},
	pages={256--269},
	year={May 2018}
}

@article{taheri2020joint,
	title={Joint spectral-spatial precoders in {MIMO}-{OFDM} transmitters},
	author={T. Taheri and M. Mohamad and R. Nilsson and J. van de Beek},
	journal={Signal Process.},
	volume={172},
	pages={107538},
	year={Jul. 2020},
	publisher={Elsevier}
}

@ARTICLE{7913599,
		author={F. Sohrabi and W. {Yu}},
		journal={IEEE J. Sel. Areas Commun.}, 
		title={Hybrid Analog and Digital Beamforming for {mmWave} {OFDM} Large-Scale Antenna Arrays}, 
		year={Apr. 2017},
		volume={35},
		number={7},
		pages={1432-1443},
		doi={10.1109/JSAC.2017.2698958}}

@ARTICLE{7397861,
		author={X. {Yu} and J-.C. {Shen} and J. {Zhang} and K.B. Letaief},
		journal={IEEE J. Sel. Topics Signal Process.}, 
		title={Alternating Minimization Algorithms for Hybrid Precoding in Millimeter Wave {MIMO} Systems}, 
		year={Feb. 2016},
		volume={10},
		number={3},
		pages={485-500},
		doi={10.1109/JSTSP.2016.2523903}}

@ARTICLE{7366560,
	author={A. Tom and A. Şahin and H. Arslan},
	journal={IEEE Trans. Commun.}, 
	title={Suppressing Alignment: Joint {PAPR} and Out-of-Band Power Leakage Reduction for {OFDM}-Based Systems}, 
	year={Dec. 2016},
	volume={64},
	number={3},
	pages={1100-1109},
	doi={10.1109/TCOMM.2015.2512603}}

@ARTICLE{6928432,
	author={L. Liang and W. Xu and X. Dong},
	journal={IEEE Wireless Commun. Lett.}, 
	title={Low-Complexity Hybrid Precoding in Massive Multiuser {MIMO} Systems}, 
	year={Oct. 2014},
	volume={3},
	number={6},
	pages={653-656},
	doi={10.1109/LWC.2014.2363831}}

@book{bertsekas1999nonlinear,
	title={Nonlinear Programming},
	author={D. P. {Bertsekas}},
	isbn={9781886529052},
	year={3rd ed., 2016},
	publisher={Athena Scientific}
}
	\newpage

\end{document}